  \newcommand{\whenbook}[1]{%
    #1%
  }%
\providecommand{\whenbook}[1]{}%
\NewDocumentEnvironment{When}{m +b}{%
  \providebool{#1}%
  \ifbool{#1}{#2}{}%
}{}%
\NewDocumentEnvironment{Unless}{m +b}{%
  \providebool{#1}%
  \ifbool{#1}{}{#2}%
}{}%
  \newcommand{\whendraft}[1]{#1}%
    \newcommand{\unlessdraft}[1]{}%
\NewDocumentEnvironment{Draft}{+b}{%
  \whendraft{#1}%
}{%
}%
\NewDocumentEnvironment{DRAFT}{O{BlueViolet}}{%
  \begin{Draft}%
    \color{#1}%
  }{%
  \end{Draft}
}%
\crefname{invariantsi}{invariant}{invariants}
\NewDocumentCommand{\undernote}{s O{blue} m m}{
  \IfBooleanT{#1}{\smash}%
  {\color{#2} %
    \underbrace{\normalcolor%
      #4}_{\mathclap{\text{#3}}} %
  }%
  \IfBooleanT{#1}{\vphantom{#4}}
}
\newcommand{\defterm}[1]{{\boldmath\normalfont \bfseries #1}}%
\renewcommand{\defterm}{\emph}%
\g@addto@macro\bfseries{\boldmath}
\titlespacing*{\paragraph}{%
  0pt}{%              left margin
  {\medskipamount}}{% space before (vertical)
  1em}%               space after (horizontal)
\titleformat{\subparagraph}[runin]{\itshape}{0pt}{}{}%
\titlespacing*{\subparagraph}{%
  0pt}{%              left margin
  {\medskipamount}}{% space before (vertical)
  1em}%               space after (horizontal)
\newlength{\myalphabet}                %
\newlength{\mywidth}                   %
\newlength{\mymargin}                  %
\providecommand{\wrt}{with respect to\xspace}%
\definecolor{calypso}{RGB}{50, 104, 145} %
\definecolor{almostblack}{RGB}{18, 18, 18} %
\begin{document}

\newcommand{\W}{\mathcal{W}}
\newcommand{\potential}{\fparnew{\varphi}} %
\providecommand{\pot}{\potential}       %
\providecommand{\pote}{\potential}
\providecommand{\potentialB}{\fparnew{\psi}}      %
\providecommand{\poteB}{\potentialB}              %
\providecommand{\len}{\fparnew{\ell}}       %
\newcommand{\potlen}{\fparnew{\ell_{\potential}}} %
\newcommand{\plen}{\fparnew{\ell_{\potential}}}   %
\newcommand{\p}{\fparnew{\varphi}}                %
\newcommand{\lenp}{\fparnew{\ell_{\p}}}

\newcommand{\through}{\fparnew{\operatorname{T}}} %
\newcommand{\sandwich}{\fparnew{\operatorname{S}}} %
\newcommand{\disp}{\fparnew{d_{\p}}}                 %
\newcommand{\dis}{\fparnew{d}}                     %
\newcommand{\hopd}[1]{\fparnew{d^{#1}}}           %
\newcommand{\hopdp}[2][\p]{\fparnew{d^{#2}_{#1}}}            %
\newcommand{\shopd}[1]{\fparnew{\hat{d}^{#1}}}               %
\newcommand{\shopdp}[2][\p]{\fparnew{\hat{d}_{#1}^{#2}}}
\newcommand{\phopd}[1]{\fparnew{\hat{d}^{#1}}}               %
\newcommand{\phopdp}[2][\p]{\fparnew{\hat{d}_{#1}^{#2}}}
\providecommand{\deg}{\fparnew{\operatorname{deg}}} %
\providecommand{\indeg}{\fparnew{\operatorname{deg}^-}} %
\providecommand{\outdeg}{\fparnew{\operatorname{deg}^+}} %
\providecommand{\hopindeg}[1]{\fparnew{\operatorname{deg}_{-}^{#1}}} %
\providecommand{\hopoutdeg}[1]{\fparnew{\operatorname{deg}_{-}^{#1}}} %
\providecommand{\heavyin}[1]{H^{-}_{#1}} %
\providecommand{\hin}{\heavyin} %
\providecommand{\heavyout}[1]{H^{+}_{#1}} %
\providecommand{\hout}{\heavyout}
\providecommand{\reachin}[1]{R^{-}_{#1}} %
\providecommand{\reachout}[1]{R^+_{#1}}  %
\providecommand{\apxheavy}{\smash{\tilde{H}}} %
\providecommand{\negV}{N}                     %
\providecommand{\negE}{E^-}
\providecommand{\numN}{k}
\renewcommand{\mod}{\operatorname{mod}} %
\providecommand{\distance}{\fparnew{d}} %

\providecommand{\varh}{\eta}             %
\providecommand{\varvarh}{\zeta}
\providecommand{\supersource}{s^{\star}} %
\NewDocumentCommand{\envelope}{m m m e{_}}{
  B^{#1}%
  \IfNoValueF{#4}{_{#4}}%
  \parof{#2,#3}
}
\NewDocumentCommand{\negativeenvelope}{m m m e{_}}{
  \bar{B}^{#1}%
  \IfNoValueF{#4}{_{#4}}%
  \parof{#2,#3}
}
\newcommand{\env}{\envelope}%
\newcommand{\nenv}{\negativeenvelope}%
\NewDocumentCommand{\subhopd}{m}{\fparnew{D^{#1}}}
\NewDocumentCommand{\subpopd}{m}{\fparnew{\hat{D}^{#1}}}
\renewcommand{\bar}{\overline}%
\providecommand{\Heads}{\bar{U}}
\providecommand{\outcut}{\fparnew{\partial^+}}

\renewcommand{\subsection}[1]{\paragraph{#1.}}

% Feel free to change your colors! https://en.wikibooks.org/wiki/LaTeX/Colors
% \newcommand{\yufan}{\NameColorComment{Yufan}{Maroon}}
% \newcommand{\kent}{\NameColorComment{Kent}{Cerulean}}
% \newcommand{\peter}{\NameColorComment{Peter}{PineGreen}}

\newcommand{\authornote}{%
  \texttt{$\setof{\texttt{huan1754},\texttt{jin453},\texttt{krq}}$@purdue.edu}.
  Purdue University, West Lafayette, Indiana. YH was supported in part
  by NSF grant IIS-2007481. PJ and KQ were supported in part by NSF
  grant CCF-2129816.}

\author{Yufan Huang \and Peter Jin \and Kent Quanrud}

% Comment out authors when submitting.
\title{Faster negative length shortest paths
  by~bootstrapping~hop~reducers}

\maketitle

\begin{abstract}
  The textbook algorithm for real-weighted single-source shortest
  paths takes $\bigO{m n}$ time on a graph with $m$ edges and $n$
  vertices. The breakthrough algorithm by \citet{Fineman24} takes
  $\apxO{m n^{8/9}}$ randomized time. The running time was
  subsequently improved to $\apxO{mn^{4/5}}$ \cite{HJQ25}.

  We build on \cite{Fineman24,HJQ25} to obtain an
  $\apxO{m n^{3/4} + m^{4/5} n}$ randomized running
  time. (Equivalently, $\apxO{mn^{3/4}}$ for $m \geq n^{5/4}$, and
  $\apxO{m^{4/5} n}$ for $m \leq n^{5/4}$.) The main new technique
  replaces the hop-reducing auxiliary graph from \cite{Fineman24} with
  a bootstrapping process where constant-hop reducers for small
  subgraphs of the input graph are iteratively amplified and expanded
  until the desired polynomial-hop reduction is achieved over the
  entire graph.
\end{abstract}

\unmarkedfootnote{\authornote}

\section{Introduction}

Let $G = (V,E)$ be a directed graph with $m$ edges, $n$ vertices, and
real-valued edge lengths $\len: E \to \reals$. For a fixed source
vertex $s$, the single-source shortest path (SSSP) problem is to
compute either a negative length cycle in $G$, or the shortest-path
distance from $s$ to every other vertex in $G$.  While there are
nearly linear time algorithms for nonnegative edge weights
\cite{Dijkstra59,FT87,DMMSY25} and integer weights (\wrt the bit
complexity) \cite{BNW22,BCF23}, algorithms for the real-weighted
setting are far from linear.

The textbook $\bigO{mn}$ time dynamic programming algorithm dates back
to the 1950s \cite{Shimbel55,Ford56,Bellman58,Moore59} and remained
the fastest algorithm (for any edge density) until very recently. The
first improvement was a celebrated $\apxO{mn^{8/9}}$ randomized time
algorithm by \citet{Fineman24}. ($\apxO{\cdots}$ hides logarithmic
factors.) The running time was subsequently improved to
$\apxO{m n^{4/5}}$ in \cite{HJQ25}.
\begin{theorem}
  The SSSP problem for real-weighted graphs can be solved with high
  probability in $\apxO{m n^{3/4} + m^{4/5} n}$ randomized time.
\end{theorem}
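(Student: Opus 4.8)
The plan is to follow the reduction, introduced in \cite{Fineman24} and sharpened in \cite{HJQ25}, of real-weighted SSSP to the construction of a \emph{hop reducer}: given $G$ and a target hop bound $\eta$, either exhibit a negative-length cycle or compute a potential $\p$ such that, after the standard supersource reduction, every shortest path out of $\supersource$ in the reweighted graph traverses at most $\eta$ negative edges. Once $\eta$ is in the affordable range, a finishing phase of $\eta$ rounds of Bellman--Ford recovers the exact distances in $\bigO{m\eta}$ time, while the outer reduction from the general real-weighted problem to this restricted form adds only polylogarithmic overhead. So the task becomes: construct a hop reducer for a given $\eta$ as cheaply as possible, with the final running time obtained by balancing the construction cost against the $\bigO{m\eta}$ finishing cost. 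Randomness will enter only inside the construction, through a hitting-set argument in the spirit of \cite{Fineman24}.

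To build the reducer I would bootstrap rather than form a single global auxiliary graph as in \cite{Fineman24}. The base ingredient is a \emph{constant-hop reducer at small scale}: on a subgraph induced by $s$ vertices, a bounded number of Bellman--Ford rounds confined to that subgraph, together with an elementary reduction step, produces in $\apxO{|E|}$ time a potential after which every shortest path internal to that subgraph carries only $\bigO{1}$ negative edges. Applying this in parallel across the $\bigO{n/s}$ pieces of a recursive decomposition of $G$ costs $\apxO{m}$ and yields a scale-$s$, $\bigO{1}$-hop reducer. The inductive ingredient is \emph{amplification}: from a scale-$s$, $r$-hop reducer one extracts a scale-$(st)$, $r'$-hop reducer --- with $r'$ a controlled function of $r$ and $t$, say $r$ times a fixed power of $t$ --- by merging $t$-sized groups of pieces into larger ones and running $\Theta(r')$ Bellman--Ford rounds on each already-reweighted larger piece: the previous level's potential has already flattened the interiors of the small pieces, so the only negative hops left to count and eliminate are those crossing piece boundaries. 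After $L = \Theta(\log_t(n/s))$ amplifications one reaches a single scale-$n$ reducer whose hop bound is the product of the $r'_i$, which we arrange to be at most $\eta$.

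The heart of the argument, and the step I expect to be hardest, is making a single amplification round simultaneously correct and affordable. Correctness is a potential-composition statement: the amplified potential is the previous one plus a correction that flattens the long internal runs created inside each merged piece, and one must verify that this correction neither reintroduces too many negative hops elsewhere nor lets the successive corrections across the $L$ levels interfere with one another. Affordability hinges on bounding the number of boundary-crossing negative edges on any shortest path, and this is where the randomization is used: a random sample of $\apxO{n/\eta}$ hub vertices hits, with high probability, every shortest path that still has at least $\eta$ negative hops, so that splitting such paths at their hub vertices genuinely shortens the pieces that remain to be handled. The heavy-degree and reach-set structures $\heavyin{k},\heavyout{k},\reachin{k},\reachout{k}$ carry the bookkeeping that charges the work spent processing the hub vertices, and the $\apxO{1}$-round interplay between internal and crossing negative edges is what keeps the ratio $r'/r$ bounded.

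It then remains to choose the scale ratio $t$, the number of levels $L$, the hop target $\eta$, and the hub-sampling rate so as to balance the total amplification cost --- roughly $\sum_i \apxO{m\, r'_i}$ plus the cost of processing hub vertices --- against the base-reducer cost and the $\bigO{m\eta}$ finishing cost. The hub-processing cost scales like the number of pieces times a small power of the group size, so on dense pieces it is driven by $|E|$ rather than by $|V|$; optimizing the vertex-dominated and edge-dominated contributions separately produces the two regimes $\apxO{mn^{3/4}}$ (when $m \geq n^{5/4}$) and $\apxO{m^{4/5}n}$ (when $m \leq n^{5/4}$), and pushing the depth-$L$ recursion and the union bound over all the random samples through this optimization accounts for the precise logarithmic exponents --- $\log^{7/4}n$ in the first term and $\log^{8/5}n\,\log^{2/5}\log n$ in the second.
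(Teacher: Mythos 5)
Your proposal captures the paper's overall spirit---bootstrapping from small scales instead of building one global auxiliary graph, using random sampling to hit walks with many negative hops, and balancing construction cost against finishing cost---but the concrete mechanism you describe is not the one the paper uses, and several ideas that are essential to making the bound work are missing.

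First, the paper's algorithm is not a single pass that computes a global hop-bounding potential and then finishes with $\bigO{m\eta}$ Bellman--Ford. It follows the Fineman/HJQ incremental-reweighting paradigm: each iteration neutralizes a subset of roughly $\sqrt{k\log n}$ negative vertices, and the iteration is repeated until all negative edges are gone. The ``hop reducer'' in the paper is an auxiliary \emph{graph} $H$ (with $V\subseteq V_H$ and $d_G(s,t)\le d^{\lceil k/h\rceil}_H(s,t)\le d^k_G(s,t)$), not a potential that bounds negative hops on shortest paths; you conflate the two. Second, the subset to neutralize is obtained via two ingredients you do not mention at all: a \emph{negative sandwich} $(s,U,t)$ produced by sampling, and a \emph{betweenness reduction} (the paper's new multiscale version, reducing $(\eta+\bigO{\log n})$-hop betweenness to $n\eta/h$ in $\apxO{mh+nh^2}$ time) together with a remotization potential that guarantees $U$'s $\eta$-hop negative reach is at most $n\eta/h$. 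Without these, you have no way to keep the bootstrapped reducers linear-size; the paper's $n_i=\bigO{2^i n/h}$ and $m_i=\bigO{2^i m/h + |U|^2 i}$ bounds on $H_i$ come directly from the remoteness of $U$.

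Third, the bootstrapping is not a recursive decomposition of $G$ into $\bigO{n/s}$ disjoint pieces that get merged $t$ at a time; the paper works with a nested chain $G_1\subseteq G_2\subseteq\cdots\subseteq G_L=G$ where $G_i$ is the $2^i$-hop negative reach of the single sandwich set $U$, and the scale doubles at each step. The link between levels is also different from what you describe: it is not a potential-composition argument, but a mutual recursion through \emph{valid distance estimates} $\delta_i(s,t)$ satisfying three precise conditions (at least $d_i(s,t)$, at most any proper hop distance with between $2^{i-1}$ and $2^i$ hops, and a boundary condition along $\partial^+(V_i)$); \reflemma{valid-estimates->hop-reduction} turns estimates at levels $j<i$ into a $2^{i-1}$-hop reducer $H_i$ via shortcut/exit/self arcs, and \reflemma{hop-reduction->valid-estimates} uses $H_i$ and a random midpoint sample of negative vertices (probability $\min\{1,\bigO{\log n/2^i}\}$) to produce estimates at level $i$. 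Finally, the heavy-degree and reach structures $\heavyin{k},\heavyout{k},\reachin{k},\reachout{k}$ you invoke for the bookkeeping do not appear in this paper at all, and the sparse-regime improvement (the $m^{4/5}n$ term) relies on a further idea you do not touch: lowering betweenness only between \emph{sampled} negative vertices rather than all pairs (\reflemma{sparse-betweenness-reduction}). As written, the proposal would not assemble into a proof.
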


An equivalent way to state the running time is $\apxO{mn^{3/4}}$ for
$m \geq n^{5/4} $ and $\apxO{m^{4/5} n}$ for $m \leq
n^{5/4}$.

\begin{comment}
  \begin{theorem}
    The SSSP problem for real-weighted graphs can be solved with high
    probability in
    $\bigO{m n^{3/4} \log^{7/4} n + (m + n \log n)^{4/5}n \log^{8/5}n \log^{2/5}
      \log n}$ randomized time.
  \end{theorem}

  An equivalent way to state the running time is
  $\bigO{mn^{3/4} \log^{7/4} n}$ for
  $m \geq n^{5/4}\log^{2} \log n / \log^{3/4} n$ and
  $\bigO{(m + n \log n)^{4/5} n \log^{8/5} n \log^{2/5} \log n}$ for
  $m \leq n^{5/4} \log^2 \log n / \log^{3/4} n$.
\end{comment}

We begin with preliminaries in
\Refsection{preliminaries}. \Refsection{previous-algorithms} outlines
more involved definitions and subroutines from previous works
\cite{Fineman24,HJQ25}, and contains an overview of how the components
fit together to obtain the $\apxO{mn^{4/5}}$ running time from
\cite{HJQ25}. \Refsection{bootstrap} presents a bootstrapping
construction of ``hop reducers'' and is the main technical
contribution of this work. \Refsection{dense} applies the new
techniques to denser graphs ($m \geq \apxOmega{n^{5/4}}$) and
\refsection{sparse} applies them to sparser graphs
($m \leq \apxO{n^{5/4}}$).

\section{Preliminaries}

\labelsection{preliminaries}

Let $\mu = m + n \log n$ and let $\numN$ denote the number of negative
edges.

\subsection{Distances}
The distance from $s$ to $t$, denoted by $d(s,t)$, is defined as the
infimum length over all walks from $s$ to $t$.  For vertex
sets $S$ and $T$, we let $d(S,T)$ denote the infimum distance $d(s,t)$
over all $s \in S$ and $t \in T$.

\subsection{Preprocessing and negative vertices}

We assume the maximum in-degree and out-degree are both
$\bigO{m / n}$. We also assume that $G$ has $\numN \leq n/2$ negative
edges, and that each negative edge $(u,v)$ is the unique incoming edge
of its head $v$ and the unique outgoing edge of its tail $u$
\cite{Fineman24,HJQ25}.  We identify each negative edge $(u,v)$ with
its tail $u$, and call $u$ a \defterm{negative vertex}.  When there is
no risk of confusion, we may reference a negative edge by its negative
vertex and vice-versa. We let $N$ denote the set of negative
vertices. For a set of negative vertices $U \subseteq N$, we let
$\Heads = \setof{v \where u \in U \andcomma (u,v) \in E}$ be the set
of heads of the negative edges associated with $U$.

For a set of negative vertices $U$, we let $G_U$ denote the subgraph
obtained by restricting the set of negative edges to those
corresponding to $U$. We let $d_U(\cdot,\cdot) = d_{G_U}(\cdot,\cdot)$
denote distances in $G_U$.  We let $G^+ = G_{\emptyset}$ denote the
subgraph of nonnegative edges.

\subsection{Hop distances}

The number of \emph{hops} in a walk is the number of negative edges
along the walk, counted with repetition. An \defterm{$h$-hop walk} is
a walk with at most $h$ negative edges. For an integer $h$, and
vertices $s,t \in V$, we let
\begin{math}
  \hopd{h}{s,t}
\end{math}
denote the infimum length over all $h$-hop walks from $s$ to $t$.
Hop distances obey the recurrence
\begin{align*}
  \hopd{h+1}{s,t} = \min{\hopd{h}{s,t},\, \min_{u \in N} \hopd{h}{s,u} +
  \len{u,v}+
  \hopd{0}{v,t}}. \labelthisequation{hop-distance}
\end{align*}
For fixed $s$, if $\hopd{h+1}{s,v} = \hopd{h}{s,v}$ for all
$v \in \negV$, then $\dis{s,v} = \hopd{h+1}{s,v}$ for all $v$.

For fixed $s$, and $h \in \naturalnumbers$, one can compute the
$h$-hop distance $\hopd{h}{s,v}$ for all $v \in V$ in $\bigO{h \mu}$
time by a hybrid of Dijkstra's algorithm and dynamic programming
\cite{DI17,BNW22}. The algorithm can be inferred from
\refequation{hop-distance}: given $\hopd{h}{s,u}$ for all $u$, we can
compute $\min_{u \in N} \hopd{h}{s,u} + \len{u,v} + \hopd{0}{v,t}$ for
all $t$ with a single call to Dijkstra's algorithm over an appropriate
auxiliary graph. By maintaining parent pointers, this also computes
the walks attaining the $h$-hop distances in the same running time.

\subsection{Proper hop distances}

A \emph{proper $h$-hop walk} is defined as a walk with exactly $h$
negative edges and where all negative vertices are distinct.  For
vertices $s,t$, let
\begin{math}
  \shopd{h}{s,t}
\end{math}
denote the infimum length over all proper $h$-hop walks from $s$ to
$t$.  We call
\begin{math}
  \shopd{h}{s,t}
\end{math}
the \defterm{proper $h$-hop distance} from $s$ to $t$.

Proper hop distances are generally intractable.  For large $h$, proper
hop distances capture NP-Hard problems such as Hamiltonian path.  For
small $h$, single-source proper $h$-hop distances can be computed in
$\apxO{2^{\bigO{h}} m}$ randomized time using color-coding techniques
\cite{AYZ95}. We will only indirectly access proper hop distances via
the following efficient subroutine.

\begin{lemma}[\cite{HJQ25}, Lemma 3.1]
  \labellemma{implicit-proper-distance} Given a set of negative
  vertices $S$, there is an $\bigO{h \mu}$-time algorithm that returns
  either a negative cycle, a pair $s,t \in S$ with
  $\shopd{h}{s,t}_S < 0$, or the distances $d_S(V,t)$ for all
  $t \in V$.
\end{lemma}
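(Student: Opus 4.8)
The plan is to reduce the task to a single invocation of the $\bigO{h\mu}$-time hop-distance computation described in \Refsection{preliminaries}, after a one-time color-coding preprocessing step that makes proper $h$-hop walks detectable. First I would randomly assign to each negative vertex in $S$ a color from $\setof{1,\dots,h}$, uniformly and independently. Call a walk \emph{colorful} if the negative vertices it traverses receive distinct colors in increasing order of appearance along the walk. By the standard Alon--Yao--Zwick argument \cite{AYZ95}, any fixed proper $h$-hop walk becomes colorful with probability at least $h!/h^h = 2^{-\bigO{h}}$, so repeating the whole construction $2^{\bigO{h}}\log n$ times and taking the best answer boosts the success probability to high probability; this is where the $\apxO{2^{\bigO{h}}}$ overhead (absorbed into the $\bigO{h\mu}$ bound, treating $h$ as a small parameter) would come from, though one can also fold the color classes into a layered graph to get the clean $\bigO{h\mu}$ statement.

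Next I would build a layered auxiliary graph $G'$ with $h+1$ copies $V_0,\dots,V_h$ of $V$. Inside each layer I place all nonnegative edges of $G$ (so layer $i$ supports walking in $G^+ = G_\emptyset$ freely). For each negative edge $(u,v)$ with $u \in S$ of color $c$, I add an edge from the copy of $u$ in layer $i$ to the copy of $v$ in layer $i$ whenever this is the $c$-th negative edge used — more precisely, I route color-$c$ negative edges as the unique transition into a sublayer indexed by having-used-colors-$\setof{1,\dots,c}$; the cleanest implementation keeps $h$ layers where layer $i$ is entered only via a color-$i$ negative edge, so a path through layers $i_1 < i_2 < \cdots$ automatically uses distinct, increasing colors and hence corresponds to a colorful (in particular proper) hop walk. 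A super-source $\supersource$ connects with zero-length edges to the copy of every $s \in S$ in layer $0$. Running the $\bigO{h\mu}$-time Dijkstra/DP hybrid from $\supersource$ on $G'$ then yields, for each $t \in V$ and each layer $i$, the minimum length of a colorful walk from some $s \in S$ to $t$ using exactly $i$ negative edges.

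From these quantities I extract the three possible outputs. If the hop-distance computation discovers a negative cycle in $G$ (it can be run to detect this, or a negative cycle surfaces when some distance fails to stabilize), return it. Otherwise, I check whether any layer-$i$ distance to a copy of some $t \in S$ is negative for $i \geq 1$: that certifies a colorful, hence proper, $i$-hop walk from some $s \in S$ to $t$ of negative length, and since $\shopd{i}{s,t}_S \leq$ that length while $i \leq h$, concatenating or truncating gives a pair $s,t \in S$ with $\shopd{h}{s,t}_S < 0$ (monotonicity in $h$ handles the "exactly $h$" versus "at most $h$" mismatch, or one repeats the construction for each $h' \leq h$ and merges, still within $\bigO{h\mu}$). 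If neither occurs, then no proper $h$-hop walk among $S$-vertices has negative length, which — combined with the preprocessing assumption that each negative edge is the unique in/out edge at its endpoints, so any walk can be shortcut to one with distinct negative vertices without increasing length — implies $d_S(V,t)$ is attained by a walk of at most $\abs{S} \leq h$ hops after the usual Bellman-Ford-style shortcutting; reading off the minimum over all layers of the distance to each $t$ gives $d_S(V,t)$.

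The main obstacle I anticipate is the bookkeeping that reconciles \emph{proper} $h$-hop walks (exactly $h$ distinct negative vertices) with what the layered graph naturally computes (colorful walks of each length $\leq h$), and in particular arguing that a negative proper-hop walk of some length $i < h$ can be promoted to the conclusion "$\shopd{h}{s,t}_S < 0$" or handled separately without blowing the running time — this needs the uniqueness-of-negative-incident-edges normalization and a small union over hop counts. The color-coding success-probability bookkeeping is routine once $h$ is treated as the small parameter it is in the applications.
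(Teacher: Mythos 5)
There is a genuine gap. The whole point of \reflemma{implicit-proper-distance}, as the paper notes in the sentence immediately preceding it, is to \emph{avoid} the $\apxO{2^{\bigO{h}} m}$ cost of computing proper $h$-hop distances by color-coding: ``We will only indirectly access proper hop distances via the following efficient subroutine.'' Your proposal goes right back to color-coding and then claims the $2^{\bigO{h}}$ repetition factor can be ``absorbed into the $\bigO{h\mu}$ bound, treating $h$ as a small parameter it is in the applications.'' But $h$ is not small in the applications: the lemma is invoked with $h = h_0 = \bigTheta{\log n}$ (see \reflemma{sparse-betweenness-reduction} and the proof of \reflemma{sampling} in \cite{HJQ25}), so $2^{\bigO{h}}$ is $n^{\bigO{1}}$ --- a polynomial overhead, not a constant. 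The alternative you float, ``fold the color classes into a layered graph to get the clean $\bigO{h\mu}$ statement,'' is not substantiated: tracking which colors have been used requires $2^h$ states, and restricting to increasing color order drops the per-walk success probability to $h^{-h}$, which is even worse than $h!/h^h$. Either way you do not get $\bigO{h\mu}$ time with high probability when $h = \Theta(\log n)$.

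The deeper misreading is of what the lemma actually demands. It does not require you to compute $\shopd{h}{s,t}_S$ for any pair; it only asks, as one of three disjunctive outputs, for a \emph{certificate} that some pair has $\shopd{h}{s,t}_S < 0$, and this output only needs to be produced when the other two outputs (negative cycle or converged distances) are unavailable. The intended argument runs the ordinary $\bigO{h\mu}$-time $h$-hop Bellman--Ford/Dijkstra hybrid in $G_S$ with parent pointers, and if the distances have not converged, inspects a shortest $h$-hop walk witnessing the decrease: if some negative vertex repeats along it, an ``uncrossing'' argument extracts a negative cycle; if all negative vertices are distinct, the walk is proper, which directly certifies $\shopd{h}{s,t}_S < 0$ for the appropriate $s,t \in S$. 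No color-coding and no proper-hop-distance computation is needed. Your approach both overshoots the running time budget and misses this structural case analysis, so it does not prove the lemma as stated.
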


\subsection{Vertex potentials}

Given vertex potentials $\p: V \to \reals$, let
$\lenp{e} \defeq \len{e} + \p{u} - \p{v}$ for an edge $e = (u,v)$. We
say that $\p$ \emph{neutralizes} a negative edge $e$ if
$\lenp{e} \geq 0$, and neutralizes a negative vertex if it neutralizes
the corresponding negative edge. \citet{Johnson77} observed that the
potentials $\p{v} = d(V,v)$ neutralizes all edges (assuming there are
no negative-length cycles). We let $G_{\varphi}$ denote the graph with
edges reweighted by $\p$ and $\disp$ denote distances \wrt $\lenp$.

We say that potentials $\p$ are \emph{valid} if $\lenp{e} \geq 0$ for
all $e$ with $\len{e} \geq 0$; i.e., $\p$ does not introduce any new
negative edges. For any hop parameter $h \in \naturalnumbers$ and any
set of vertices $S$, the potentials $\p{v}_1 = \hopd{h}{S,v}$ and
$\p{v}_2 = -\hopd{h}{v, S}$ are valid.  Given two valid potentials
$\p_1$ and $\p_2$, $\max{\p_1,\p_2}$ and $\min{\p_1,\p_2}$ are also
valid potentials. If $\p_1$ is valid for $G$, and $\p_2$ is valid for
$G_{\p_1}$, then $\p_1 + \p_2$ is valid for $G$.

The algorithms discussed here follow an incremental reweighting
approach, dating back to \cite{Goldberg95}, where the graph is
repeatedly reweighted along valid potentials that neutralize a subset
of negative edges, until (almost) all edges are nonnegative. (A small
number of remaining negative edges can always be cleaned up by
Johnson's technique in nearly linear time per negative edge.)  Then
the distances can be computed by Dijkstra's algorithm in the
reweighted graph.  Henceforth we focus exclusively on the effort to
neutralize all or almost all of the negative edges.

When algorithms reweight the graph along multiple potentials in one
iteration, we treat any initially negative edge as negative throughout
the iteration, even if incidentally neutralized. This is important for
preserving hop-counting invariants. $k$ remains the number of negative
edges at the beginning of the iteration.

\section{Previous algorithms and techniques}

\labelsection{previous-algorithms}

\subsection{Remote edges and hop reduction}

Given a set of negative vertices $U$, computing Johnson's potentials
on $G_U$ requires a $\sizeof{U}$-hop distance computation and thus
nearly linear time per vertex in $U$. \cite{Fineman24} introduced
auxiliary ``hop reduction'' graphs to decrease the number of hops that
are needed. We formalize the idea in the following definition.

\begin{definition}
  Let $h \in \naturalnumbers$. We say that a graph $H = (V_H,E_H)$ is
  an \defterm{$h$-hop reducer} for a graph $G = (V, E)$ if $V \subseteq V_H$ and
  \begin{math}
    \distance{s,t}_{G} \leq \hopd{\roundup{k/h}}{s,t}_H \leq \hopd{k}{s,t}_{G}
  \end{math}
  for all $s,t \in V$ and $k \in \naturalnumbers$.
\end{definition}

A natural idea for a hop reducer for $G_U$ is the following layered
auxiliary graph $H$. We start with $h+1$ copies of the positive part
$G^+$. Let $V_i$ denote the vertices in the $i$th layer (for
$i = 0,\dots,h$) and for $v \in V$ let $v_i$ denote the copy of $v$ in
$V_i$. For each negative arc $(u,v)$, we have copies $(u_i,v_{i+1})$
from each layer to the next. For each vertex $u$, we also have length
$0$ ``self-arcs'' $(u_i,u_{i+1 \mod h + 1})$ from each layer to the
next ($\mod h+1$).  Define a potential function $\varphi$ by
$\varphi(v_i) = d^i(V,v)$. It is easy to verify that $\pote$
neutralizes all edges in the auxiliary graph except the self-arcs
wrapping around from layer $h$ back to layer $0$.  Starting from
$V_0$, one hop in $H_{\varphi}$ (cycling through all the layers)
simulates $h$ hops in $G$, so one needs only $(\sizeof{U}/h)$-hop
distances in $H_{\varphi}$ to compute Johnson's potentials
neutralizing all the edges in $G_U$. The only catch is that $H$ is a
factor $h$ larger than $G_U$, so there is no overall speedup.

To reduce the size of $H$, \cite{Fineman24} introduced the notion of
``remote'' sets of negative vertices.  We say that a vertex $s$ can
\defterm{$h$-hop negatively reach} another vertex $t$ if there is an
$h$-hop $(s,t)$-walk with negative length.  For a parameter
$r \in \naturalnumbers$, and a set of negative vertices
$U \subseteq V$, $U$ is \defterm{$r$-remote} if the collective $r$-hop
negative reach of $U$ has at most $n/r$ vertices. Remote sets yield
linear-size hop reducers, as follows.

\begin{lemma}[{\cite[Lemma 3.3]{Fineman24}}]
  \labellemma{hop-reduction} Let $G$ have maximum in-degree and
  out-degree $\bigO{m / n}$.  Given a set of $r$-remote vertices $U$,
  one can construct an $r$-hop reducer for $G_U$ with $\bigO{m}$ edges
  and $\bigO{n}$ vertices in $\bigO{\mu}$ time.
\end{lemma}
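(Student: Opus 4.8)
The plan is to use the layered graph of the ``natural idea'' above but to slim it using remoteness so that only one layer is a full copy of $G^+$. Let $R$ be the collective $r$-hop negative reach of $U$ (furnished alongside $U$); remoteness gives $\sizeof{R} \le n/r$, and since each negative edge has negative length its head lies in $R$, so $\Heads \subseteq R$. Let $H$ have layer $0$ equal to a full copy of $G^+$ on $V$ and layers $1,\dots,r$ equal to copies of the induced subgraph $G^+[R]$; for each transition $i \to i+1$ add a copy $(u_i, v_{i+1})$ of every negative edge $(u,v)$ whose tail and head both have copies in those layers, add length-$0$ forward self-arcs $(v_i, v_{i+1})$ for $0 \le i < r$, and add the wrap-around self-arcs $(v_r, v_0)$, all among the vertices present in the respective layers. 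Finally reweight $H$ by a potential $\varphi$ feasible for $H$ with the wrap-arounds deleted --- for instance $\varphi(v_i) = $ the shortest-path distance in that graph from the layer-$0$ vertices to $v_i$ --- so that every edge of $H$ is neutralized except the wrap-arounds, which therefore become the negative edges of $H$. One hop in $H$ then amounts to threading once through all $r+1$ layers, hence to at most $r$ hops in $G_U$.

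For the size: layer $0$ has $n$ vertices and $m$ edges, and each of the $r$ internal layers has $\le \sizeof{R} \le n/r$ vertices and, by the $\bigO{m/n}$ in/out-degree bound, $\bigO{(m/n)\sizeof{R}} = \bigO{m/r}$ edges, so the internal layers contribute $\bigO{n}$ vertices and $\bigO{m}$ edges in total; the negative-edge copies and the self-arcs add $\bigO{m}$ and $\bigO{n}$ more by the same count. Given $R$, the potential $\varphi$ is obtained from one layer-by-layer shortest-path sweep over $H$ with the wrap-arounds removed --- a nonnegative Dijkstra within each layer, relaxing the forward cross-layer copies in layer order --- at cost $\bigO{\mu}$ since $H$ has $\bigO{m}$ edges, and it is feasible by construction. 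The rest of the assembly is linear, so the construction runs in $\bigO{\mu}$ time.

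It remains to verify $\distance{s,t}_{G_U} \le \hopd{\roundup{k/r}}{s,t}_H \le \hopd{k}{s,t}_{G_U}$ for all $s,t \in V$ and $k$. The lower bound is routine: a walk in $H$ from $s_0$ to $t_0$ maps, by deleting self-arcs and sending each layer copy to its original edge, to a walk in $G_U$ from $s$ to $t$, and the $H$-walk's reweighted length equals that walk's length shifted by the fixed amount $\varphi(s_0) - \varphi(t_0)$; hence $\hopd{\roundup{k/r}}{s,t}_H \ge \distance{s,t}_{G_U}$ up to that shift, which is absorbed into the reducer's intended use. For the upper bound, take a shortest $k$-hop $(s,t)$-walk $W$ in $G_U$ and route it through $H$: traverse each maximal positive stretch inside the current layer, advance one layer along each negative edge, and after every $r$ negative edges take the wrap-around self-arc --- the single hop of $H$ --- back to layer $0$. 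This uses at most $\roundup{k/r}$ hops of $H$, and since self-arcs have length $0$ and reweighting shifts only by $\varphi$, the routed walk has the same length as $W$.

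The one nontrivial point --- and the only place remoteness is used --- is that this routing is legal: every vertex $W$ visits while in an internal layer must lie in $R$. Making this so is the content of \cite{Fineman24}'s construction and is the step I expect to take real work. The difficulty is that a positive stretch following a negative edge need not have negative total length, so vertices on it need not be $r$-hop negatively reached by $U$; a blind grouping of $W$ into $r$-hop segments can therefore escape $R$, and grouping instead at the points where the running length recovers fails to bound the number of segments by $\roundup{k/r}$. Resolving this --- so that the $k$ hops of a suitably chosen shortest walk really do collapse into $\roundup{k/r}$ hops of $H$ --- is where one must exploit that a shortest walk repeats no vertex and that each negative vertex's negative edge is its unique outgoing edge, which together pin down how $W$ can interleave positive stretches with negative edges and hence which vertices the internal layers (possibly a slightly larger set than $R$, but still of size $\bigO{n/r}$) must hold. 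Once that is settled the upper bound, and hence the lemma, follows; the rest is the bookkeeping above.
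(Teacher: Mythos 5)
Your layered construction follows the paper's sketch closely: layer $0$ is a full copy of $G^+$, layers $1,\dots,r$ are copies of $G^+$ induced on the $r$-hop negative reach $X$ of $U$, and the potential is (equivalently to your recursive layer-by-layer Johnson's potential) $\varphi(v_i)=\hopd{i}{V,v}$. You also correctly flag the gap: a simulated walk can step out of $X$ after fewer than $r$ hops and then cannot be continued inside the internal layers. But the fix you speculate about --- restricting to a shortest walk with no repeated vertices, exploiting the unique-outgoing-edge normalization, or enlarging the internal layers' vertex set --- is not what Fineman does, and I do not see it closing the gap along those lines.

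The ingredient you are missing is the \emph{exit arcs} described in the paper's sketch of \reflemma{hop-reduction}. For every arc $(x,v)\in\outcut{X}$ and every layer $i>0$, add an arc $(x_i,v_0)$ of length $\len{x,v}$ returning to layer $0$. When the embedded walk exits $X$ at layer $i$, it follows the exit arc to layer $0$ --- where the full $G^+$ is available --- continues there, and takes the next negative arc back into layer $1$ at the next hop, which is safe because the head of a negative edge always lies in $X$. Exit arcs cost no $H$-hop, so the $\roundup{k/r}$ bound on wrap-arounds still holds. The crucial observation, and the one place remoteness is genuinely used, is that exit arcs remain nonnegative under $\varphi$: since $v\notin X$, no $i$-hop walk from $V$ to $v$ has negative length for any $i\le r$, so $\hopd{i}{V,v}=\hopd{0}{V,v}=0$, and therefore
\[
\len{x,v}+\varphi(x_i)-\varphi(v_0)\ =\ \len{x,v}+\hopd{i}{V,x}-0\ \ge\ \hopd{i}{V,v}\ =\ 0.
\]
Each of the $r$ layers contributes at most $\sizeof{\outcut{X}}=\bigO{(n/r)(m/n)}=\bigO{m/r}$ exit arcs, so they add $\bigO{m}$ edges overall and the size and time bounds you computed go through unchanged. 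With this addition your embedding argument, which is otherwise sound, completes the proof.
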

We sketch the construction for comparison with our new techniques. Let
$X$ be the $r$-hop negative reach of $U$.  We modify the layered graph
$H$ described above (for $h = r$) so that layers $1$ through $r$ only
duplicate $G_X^+$.  Letting $\outcut{X}$ denote the directed cut of
arcs from $X$ to $V \setminus X$, we also add ``exit arcs''
$(x_i,v_0)$ for each arc $(x,v) \in \outcut{X}$ and each layer
$i > 0$.

One can verify that the same potentials $\varphi(v_i)= \hopd{i}{V,v}$
again neutralize all the arcs except the self-arcs wrapping around
from the last layer to the first. An important observation here is
that the exit arcs $(x_i,v_0)$ (where $(x,v) \in \outcut{X}$) remain
nonnegative because $v$ is not in the negative $i$-hop reach of $U$
for $i \leq h$.  Each hop in $H_{\varphi}$ simulates $r$ hops in $G$,
as before. This time, however, $H_{\pote}$ has size $\bigO{m}$ because
we are only duplicating the subgraph induced by the negative reach $X$
of size $\bigO{m /r}$. Consequently it takes only
$\apxO{m \sizeof{U} / r}$ time in $H_{\varphi}$ to compute the desired
Johnson's potentials for $G_U$.

We note that $H_{\varphi}$ can be computed in $\bigO{\mu}$ time,
because the potential function $\pote{v_i} = \hopd{i}{V,v}$ only
pertains to the negative reach $X$ consisting of at most $n/r$ vertices.

\subsection{Betweenness and sandwiches}

The remaining challenge is constructing remote sets. \cite{Fineman24}
showed how to create remote sets using two concepts: betweenness and
negative sandwiches.  For a parameter $h \in \naturalnumbers$, and
$s,t \in V$, the \defterm{$h$-hop betweenness} is defined as the
number of vertices $v$ such that $\hopd{h}{s,v} + \hopd{h}{v,t} <
0$. \cite{Fineman24} gave the following randomized procedure to reduce
the betweenness of all pairs of vertices.

\begin{lemma}[{\cite[Lemma 3.5]{Fineman24}}]
  \labellemma{betweenness-reduction} For $h \in \naturalnumbers$,
  there is an $\bigO{\mu h^2 \log n }$ time randomized algorithm that
  returns either a set of valid potentials $\p$ or a negative cycle.
  With high probability, all pairs $(s,t) \in V \times V$ have
  $\bigO{h}$-hop betweenness (at most) $n/h$ \wrt $\ell_{\varphi}$.
\end{lemma}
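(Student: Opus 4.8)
The plan is to maintain a running valid potential $\p$, initialized to the zero potential, and to refine it over $\bigO{h\log n}$ rounds; I fix throughout a sufficiently large constant $c$, and every round performs a constant number of $ch$-hop distance computations in the current reweighted graph. By the hybrid Dijkstra / dynamic-programming subroutine from \Refsection{preliminaries}, each such computation runs in $\bigO{h\mu}$ time and either returns the distances (and the attaining walks) or exhibits a negative-length cycle, in which case we stop and return that cycle; summing over rounds gives the claimed $\bigO{\mu h^2\log n}$ running time. The potential $\p$ returned at the end is valid by construction, so it introduces no new negative edges.

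A single round, with current potential $\p$, goes as follows. Sample a random vertex set $R$, including each vertex independently with probability $\Theta(1/h)$. In $G_\p$, compute the $ch$-hop distances $\hopdp{ch}{R,v}$ and $\hopdp{ch}{v,R}$ for every $v$; by the potentials toolkit in \Refsection{preliminaries}, $v\mapsto\hopdp{ch}{R,v}$ and $v\mapsto-\hopdp{ch}{v,R}$ are valid potentials for $G_\p$, and hence so is any pointwise $\max$ or $\min$ of them with the zero potential. Replace $\p$ by $\p+\p''$, where $\p''$ is such a $\max$/$\min$ combination --- valid for $G$ because $\p''$ is valid for $G_\p$ --- chosen so that afterwards no vertex of $R$ lies $ch$-hop between any pair, i.e.\ $\hopdp{ch}{s,r}+\hopdp{ch}{r,t}\geq 0$ for all $s,t\in V$ and all $r\in R$. (Equivalently, $\p''$ neutralizes every negative $ch$-hop sandwich whose midpoint is a sampled vertex.)

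Correctness comes from two claims. The first is a hitting-set bound: if a pair $(s,t)$ has $ch$-hop betweenness set $B_{s,t}=\setof{v : \hopdp{ch}{s,v}+\hopdp{ch}{v,t}<0}$ of size more than $n/h$, then $R$ meets $B_{s,t}$ with probability $1-n^{-\Omega(1)}$; a union bound over the at most $n^2$ pairs and the $\bigO{h\log n}$ rounds shows that, with high probability, in every round $R$ hits the between-set of every pair that is still oversized. The second is a progress bound: whenever $R$ hits $B_{s,t}$, the reweighting deletes from $B_{s,t}$ not only the sampled vertices it contains but every between-vertex whose negative $ch$-hop sandwich has a sampled midpoint, which in expectation (and, by concentration, with high probability) is an $\Omega(1/h)$ fraction of $B_{s,t}$; meanwhile the validity-preserving $\max$/$\min$ composition, combined with the slack from using $ch$- rather than $h$-hop walks, ensures no other pair's between-set grows. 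Consequently every between-set of size $>n/h$ shrinks by a $(1-\Omega(1/h))$ factor each round, so $\bigO{h\log n}$ rounds drive them all to size $\leq n/h$, which is exactly the conclusion.

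The main obstacle is the progress bound, and specifically reconciling two competing effects. Plain reweighting by a valid potential $\p$ converts the betweenness condition of a pair $(s,t)$ from ``$\hopd{ch}{s,v}+\hopd{ch}{v,t}<0$'' into ``$\hopd{ch}{s,v}+\hopd{ch}{v,t}<\p{t}-\p{s}$'', i.e.\ it shifts the threshold by $\p{t}-\p{s}$, which helps pairs along which $\p$ decreases but hurts pairs along which $\p$ increases; so no single naive reweighting can shrink betweenness for all pairs simultaneously, and one must argue that the per-round shifts, composed over all rounds, never accumulate harmfully. The resolution is the $ch$-hop slack: the extra hop budget lets any short walk in the reweighted graph detour through a sampled midpoint $r\in R$, whose incoming and outgoing $ch$-hop distances are precisely the quantities folded into $\p''$, so that successive reweightings compose monotonically in the $ch$-hop measure and the threshold shift is always absorbed. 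Pinning down this detour inequality --- hence both the $\Omega(1/h)$-fraction per-round shrinkage and the no-growth guarantee --- is the crux of the proof; the remaining ingredients (validity of the composed potentials, negative-cycle detection inside the hop computations, and the Chernoff and union bounds) are routine.
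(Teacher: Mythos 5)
Your proposal takes a genuinely different route from what the paper does (and from Fineman's actual proof, which the paper echoes in its \reflemma{multiscale-betweenness-reduction}), and it has two concrete gaps that are not cosmetic.

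The paper's technique is a \emph{one-shot} construction: sample $R$ once, build an auxiliary graph $H$ obtained from $G^+$ by adding shortcut arcs $(r,v)$ of length $\hopd{\bigO{h}}{r,v}$ and $(v,r)$ of length $\hopd{\bigO{h}}{v,r}$ for every $r\in R$, $v\in V$, and then compute Johnson's potentials $\varphi$ for $H$ in one pass. Because every shortcut arc is neutralized by $\varphi$, one gets $\hopdp{\bigO{h}}{r,v}\geq 0$ and $\hopdp{\bigO{h}}{v,r}\geq 0$ \emph{simultaneously for all} $r\in R$ and $v\in V$. The betweenness bound then comes from the rank argument: for any pair $(s,t)$, the sampled low-rank $x\in R$ satisfies $\hopdp{\bigO{h}}{s,x}+\hopdp{\bigO{h}}{x,t}\geq 0$, and every higher-rank $y$ inherits nonnegativity because $\hopd{\bigO{h}}{s,y}+\hopd{\bigO{h}}{y,t}\geq\hopd{\bigO{h}}{s,x}+\hopd{\bigO{h}}{x,t}$ and the potential shift $\pote{s}-\pote{t}$ is common to all midpoints. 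There are no rounds and no incremental composition to worry about.

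Your first gap is the claim that a pointwise $\max$/$\min$ combination of $\hopdp{ch}{R,\cdot}$ and $-\hopdp{ch}{\cdot,R}$ neutralizes the sandwich of every $r\in R$. This is false for $\sizeof{R}>1$. The difficulty is that $\p(r)$ can be strictly negative: if some $r'\in R$ has $\hopd{ch}{r',r}<0$, then $\hopd{ch}{R,r}<0$ and hence $\p(r)<0$, and there is no compensating term in $\p(v)$. Concretely, with $R=\{r_1,r_2\}$ and $\hopd{ch}{r_1,r_2}=-10$, $\hopd{ch}{r_2,r_1}=20$, $\hopd{ch}{r_1,v}=0$, $\hopd{ch}{r_2,v}=5$, $\hopd{ch}{v,r_1}=10$, $\hopd{ch}{v,r_2}=0$, every relevant cycle is nonnegative, yet $\p(r_2)=-10$, $\p(v)=0$, and so $\hopdp{ch}{r_2,v}=-10+5-0=-5<0$. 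So the reweighted sandwich through $r_2$ is not neutralized, and the invariant you need at the end of each round fails. The auxiliary-graph/Johnson's-potentials step is precisely what resolves this — it is a joint feasibility computation over all the shortcut arcs at once, and it is \emph{not} expressible as a pointwise $\max$/$\min$ of the per-hub potentials.

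Your second gap is the ``progress bound,'' which you yourself flag as the crux and leave unproven. Even setting aside the first issue, the argument that each oversized between-set shrinks by a $(1-\Omega(1/h))$ factor per round, and that other between-sets do not grow, is asserted via a vague appeal to ``$ch$-hop slack'' and a ``detour inequality.'' This is exactly where the rank argument does real work in the actual proof, and it is strictly stronger than what you claim: one sampled low-rank witness immediately neutralizes \emph{all} higher-rank midpoints in a single step, rather than an $\Omega(1/h)$ fraction per round. Without that monotone rank-ordering structure, there is no reason to expect iterated reweightings to compose favorably; your own discussion (the threshold shift $\p{t}-\p{s}$ helping some pairs and hurting others) makes the danger concrete but does not discharge it.
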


\cite{Fineman24} defines a \defterm{negative sandwich} as a triple
$(s,U,t)$, where $s,t \in V$ and $U \subseteq N$, such that
$\hopd{1}{s,u} < 0$ and $\hopd{1}{u,t} < 0$ for all $u \in U$.
\cite{HJQ25} generalized this by defining a \defterm{weak $h$-hop
  negative sandwich}, for $h \in \naturalnumbers$, as a triple
$(s,U,t)$ such that
\begin{math}
  \hopd{h}{s,u} + \hopd{h}{u,t} \leq 0
\end{math}
for all $u \in U$.  The following lemma converts sandwiches into
remote edges when the betweenness is low. The lemma modifies
\cite[Lemma 2.3]{HJQ25} so that the potential $\p$ only depends on $h$
but the proof is the same.
\begin{lemma}
  \labellemma{sandwich->remote}\labellemma{remotization} Let $(s,U,t)$
  be a weak $h$-hop negative sandwich and let $(s,t)$ have $(\eta+h)$-hop
  betweenness $n/\ell$. Let
  $\p{v} = \min{\hopd{h}{s,v}, - \hopd{h}{v,t}}$. Then the $\eta$-hop
  negative reach of $U$ in $G_{\varphi}$ has at most $n/\ell$
  vertices.
\end{lemma}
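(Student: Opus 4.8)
The plan is to show that every vertex $w$ in the $k$-hop negative reach of $U$ in $G_{\varphi}$ also has negative $(k+h)$-hop betweenness for the pair $(s,t)$, i.e.\ $\hopd{k+h}{s,w} + \hopd{k+h}{w,t} < 0$. Since the set of such vertices has size at most $n/\ell$ by hypothesis, this immediately bounds the negative reach of $U$ in $G_{\varphi}$ by $n/\ell$.

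First I would fix such a $w$. By definition there is a negative vertex $u \in U$ and a $k$-hop walk $P$ from $u$ to $w$ with $\lenp{P} < 0$; unfolding the reweighting, $\lenp{P} = \len{P} + \p{u} - \p{w}$, so $\len{P} < \p{w} - \p{u}$. Because we count an initially negative edge as a hop even if $\p$ incidentally neutralizes it, $P$ is still a $k$-hop walk in $G$, and hence $\hopd{k}{u,w} \leq \len{P}$.

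The key step is to pin down $\p{u}$. Since $(s,U,t)$ is a weak $h$-hop negative sandwich we have $\hopd{h}{s,u} + \hopd{h}{u,t} \leq 0$, hence $\hopd{h}{s,u} \leq -\hopd{h}{u,t}$, and therefore $\p{u} = \min\{\hopd{h}{s,u},\, -\hopd{h}{u,t}\} = \hopd{h}{s,u}$. For $w$ I would use only the trivial bound $\p{w} \leq -\hopd{h}{w,t}$. Substituting both into $\len{P} < \p{w} - \p{u}$ gives $\hopd{h}{s,u} + \len{P} + \hopd{h}{w,t} < 0$.

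It then remains to assemble the betweenness inequality by concatenation. Prepending to $P$ an $h$-hop walk from $s$ to $u$ of length $\hopd{h}{s,u}$ yields a $(k+h)$-hop $s$-to-$w$ walk, so $\hopd{k+h}{s,w} \leq \hopd{h}{s,u} + \len{P}$; and $\hopd{k+h}{w,t} \leq \hopd{h}{w,t}$ since $h \leq k+h$. Adding these and using the bound from the previous step yields $\hopd{k+h}{s,w} + \hopd{k+h}{w,t} < 0$, as desired. I do not anticipate a genuine obstacle; the only points that need care are invoking the hop-preservation convention so that $P$ retains at most $k$ hops in $G$, and the observation that the sandwich hypothesis forces $\p{u} = \hopd{h}{s,u}$ exactly, which is precisely what lets the concatenated pieces telescope. (If the relevant hop distances are infima that are not attained, one argues instead with walks of length within $\varepsilon$ and lets $\varepsilon \to 0$.)
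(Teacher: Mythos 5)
Your proof is correct and follows what is essentially the only natural approach: show that every $w$ in the $k$-hop negative reach of $U$ in $G_\varphi$ satisfies $\hopd{k+h}{s,w} + \hopd{k+h}{w,t} < 0$, by telescoping $\lenp{P} < 0$, pinning $\p{u} = \hopd{h}{s,u}$ via the sandwich condition, bounding $\p{w} \leq -\hopd{h}{w,t}$, and concatenating an $h$-hop $s$-to-$u$ walk with $P$. The paper itself defers the proof to \cite[Lemma 2.3]{HJQ25} with the remark that ``the proof is the same,'' and your argument matches that standard route.
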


The final ingredient is a method for generating sandwiches. The
following subroutine from \cite{HJQ25} combines
\reflemma{implicit-proper-distance} with randomized sampling.

\begin{lemma}[\cite{HJQ25}, Lemma 4.1]
  \labellemma{sampling} For $h = \bigOmega{\log n}$, there is a
  randomized algorithm that, in $\bigO{h \mu}$ randomized time,
  returns either a negative cycle, a weak $h$-hop sandwich $(s,U,t)$,
  or a set of negative vertices $S$ and the distances $d_S(V,v)$ for
  all vertices $v$. With high probability, we have
  $\sizeof{U}, \sizeof{S} \geq \bigOmega{\sqrt{h k}}$ (when they are
  returned by the algorithm).
\end{lemma}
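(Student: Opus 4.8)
The plan is to sample a random subset $S$ of negative vertices, run the implicit proper-distance subroutine of \reflemma{implicit-proper-distance} on it, and route each of its three possible outputs into one of the three objects demanded here. Since $\sizeof{U}$ and $\sizeof{S}$ can never exceed $k = \sizeof{N}$, the bound $\bigOmega{\sqrt{hk}}$ is only meaningful for $h \le k$, which I will assume from now on; for $h > k$ one simply runs the subroutine on $N$ itself, which (no proper $h$-hop walk can exist) returns a negative cycle or the exact distances $d(V,\cdot)$.

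Concretely: fix a small constant $c$, set $p = c\sqrt{h/k} \le 1$, and form $S$ by keeping each negative vertex independently with probability $p$; then run \reflemma{implicit-proper-distance} on $S$ with hop parameter $h$. If it returns a negative cycle, return it. If it returns the distances $d_S(V,\cdot)$, return $S$ together with them. If it returns a pair $s,t\in S$ with $\shopd{h}{s,t}_S<0$, I would then do two $h$-hop distance computations in $G$ (one out of $s$, one into $t$), in $\bigO{h\mu}$ time, form $U = \setof{u \in N \where \hopd{h}{s,u}+\hopd{h}{u,t}<0}$, and return $(s,U,t)$. The total running time is $\bigO{h\mu}$, and $(s,U,t)$ is a weak $h$-hop negative sandwich directly from the definition of $U$, so the only thing left to prove is the two size bounds.

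The size of the returned set $S$ is a plain Chernoff bound: its expectation is $pk = c\sqrt{hk}$, and since $h=\bigOmega{\log n}$ and $k \ge h$ we get $\sqrt{hk}=\bigOmega{\log n}$, so $\sizeof{S}=\bigOmega{\sqrt{hk}}$ with high probability. The size of $U$ is the heart of the matter, and rests on one structural observation: if $G_S$ has a negative-length proper $h$-hop walk $W$ from $s$ to $t$ with successive negative vertices $u_1,\dots,u_h$, then cutting $W$ just before the negative edge of $u_i$ exhibits $W$ as an (at most) $h$-hop walk from $s$ to $u_i$ followed by an (at most) $h$-hop walk from $u_i$ to $t$, whose lengths sum to $\len{W}<0$; hence $\hopd{h}{s,u_i}+\hopd{h}{u_i,t}<0$, i.e.\ $u_i \in U_{s,t} := \setof{u\in N \where \hopd{h}{s,u}+\hopd{h}{u,t}<0}$. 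So whenever the subroutine reports a pair $(s,t)$, the set $U = U_{s,t}$ it produces already contains $h$ members of $S$. This lets me bound, for any fixed pair with $\sizeof{U_{s,t}}<B$, the probability that $G_S$ contains such a walk from $s$ to $t$ by $\binom{B}{h}p^h \le (eBp/h)^h$ --- a union bound over the at most $\binom{B}{h}$ candidate $h$-subsets of $U_{s,t}$, each lying entirely in $S$ with probability $p^h$. With $B=\sqrt{hk}/C$ we have $Bp=(c/C)h$, so this is $(ec/C)^h$; taking $C$ a large enough constant and invoking $h=\bigOmega{\log n}$ pushes it below $n^{-4}$, and a union bound over the at most $n^2$ pairs shows that with high probability the subroutine never returns a pair $(s,t)$ with $\sizeof{U_{s,t}}<\sqrt{hk}/C$. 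Hence in this case $\sizeof{U}=\bigOmega{\sqrt{hk}}$ with high probability.

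The main obstacle is exactly that last estimate. The idea that makes it work is that the sampled set $S$ has to play two roles at once: it is the large set one outputs when no short negative witness surfaces, and --- through the containment $\{u_1,\dots,u_h\}\subseteq U_{s,t}$ --- the only way a witness can surface is for some sandwich set $U_{s,t}$ to be large to begin with. Balancing the $\binom{B}{h}$ factor against $p^h$ is what forces $p=\Theta(\sqrt{h/k})$, and driving the failure probability below $1/n^{\Omega(1)}$ is precisely where the hypothesis $h=\bigOmega{\log n}$ gets used.
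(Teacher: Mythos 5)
Your proof is correct and matches the intended approach: sample $S \subseteq N$ at rate $\Theta(\sqrt{h/k})$, feed it to \reflemma{implicit-proper-distance}, and route the three outcomes directly into the three outputs of the lemma. The paper cites this lemma from \cite{HJQ25} without reproving it, but your argument lines up exactly with the paper's own analogous treatment in the proof of \reflemma{sparse-betweenness-reduction}: there, too, a random sample $S$ is fed to \reflemma{implicit-proper-distance}, ``hitting'' a sandwich is defined as $S$ containing roughly $h$ of its vertices, and a concentration/union bound over sandwiches rules out small sandwiches being hit. The one technical variation is the tail bound: you bound $\Pr\bigl[\,\sizeof{S \cap U_{s,t}} \ge h\,\bigr] \le \binom{B}{h}p^h \le (eBp/h)^h$ directly, whereas the paper invokes a Chernoff bound from expectation $\bigO{ch}$ up to $h-1$; these are interchangeable here, both giving failure probability $c'^{\,h} = n^{-\bigOmega{1}}$ once $h = \bigOmega{\log n}$. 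Your structural observation --- that a negative proper $h$-hop $(s,t)$-walk in $G_S$ forces its $h$ distinct negative vertices into $S \cap U_{s,t}$ --- is exactly the containment that makes the union-bound-over-pairs argument sound even though the reported pair $(s,t)$ is chosen adaptively by the subroutine as a function of $S$.

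One small point worth noting explicitly: your remark that the claim is vacuous for $h > k$ is correct, and your fallback (run the subroutine on $N$ itself) is fine, but in that regime the returned $S = N$ has $\sizeof{S} = k < \sqrt{hk}$, so the stated size bound is simply not attainable by any algorithm; the lemma should be read with the implicit convention $h = \bigO{k}$ (which holds in every invocation in this paper and in \cite{HJQ25}). This does not affect the substance of your argument.
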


\subsection{$\apxO{mn^{4/5}}$ randomized time}
We briefly review the $\apxO{mn^{4/5}}$ randomized time algorithm of
\cite{HJQ25}, assuming for simplicity that there are no negative
cycles.  We describe a single iteration of the algorithm that
neutralizes a subset of negative edges in $\apxO{m / k^{1/5}}$
amortized time per negative vertex. The overall running time follows
from repeating this iteration.

Let $h = k^{1/5}$. We first reduce the $h$-hop betweenness to $n/h$ in
$\apxO{mh^2}$ time (\reflemma{betweenness-reduction}). We then either
directly neutralize a set of $\apxOmega{\sqrt{hk}}$ negative edges or
obtain a weak $h$-hop sandwich $(s,U,t)$ of size
$\apxOmega{\sqrt{hk}}$ (\reflemma{sampling}), where
$\sqrt{hk} = k^{3/5}$. In the latter case, we reweight $G_U$ to make
$U$ $h$-remote (\reflemma{remotization}), and construct an $h$-hop
reducer $H$ for $G_U$ with $\bigO{m}$ edges and $\bigO{n}$ vertices
(\reflemma{hop-reduction}). We compute Johnson's potentials for $G_U$
via $H$ in $\bigO{\mu \sizeof{U}/h}$ time. Overall, we spend
$\apxO{m k^{2/5}}$ time to neutralize $\apxOmega{k^{3/5}}$ negative
vertices.

\section{Bootstrapping hop reducers}

\labelsection{bootstrap}

A bottleneck in \cite{HJQ25} (as well as in \cite{Fineman24}) is the
$\apxO{mh^2}$ time spent to reduce the $\bigO{h}$-hop betweenness to
$n/h$ in \reflemma{betweenness-reduction}. Recall that the reason that
we reduce the $\bigO{h}$-hop betweenness to $n/h$ is so that negative
sandwiches can be converted into $h$-remote sets, for which we have
the linear-size $h$-hop reducer of \reflemma{hop-reduction}.

This section describes a process that creates an $h$-hop reducer even
when the $h$-hop betweenness is only the trivial bound of $n$. This
procedure instead requires the $\parof{\eta + \bigO{\log n}}$-hop
betweenness to be $n \eta /h$ for $\eta \leq h$.  These conditions can
be accomplished in just $\apxO{m h + nh^2}$ time as follows.

\begin{lemma}
  \labellemma{multiscale-betweenness-reduction} For
  $h = \bigOmega{\log n}$, in $\bigO{(m + nh) h \log^2 n}$ time, one
  can compute potentials $\varphi$ with the following property: for
  all $s,t \in V$, and $\varh \in \naturalnumbers$, $(s,t)$ has
  $\parof{\varh + \bigO{\log n}}$-hop betweenness at most
  $n \eta / h$.
\end{lemma}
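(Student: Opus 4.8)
The plan is to apply the single-scale betweenness reduction of \reflemma{betweenness-reduction} at a geometric sequence of hop scales and combine the resulting potentials. Concretely, for each integer $j$ with $0 \le j \le \ceil{\log_2 h}$, set $h_j = 2^j$ and run the algorithm of \reflemma{betweenness-reduction} with hop parameter $\Theta(h_j + \log n)$ (we need the additive $\bigO{\log n}$ so that the guarantee is meaningful at the small scales and matches the $\eta + \bigO{\log n}$ window in the statement). Each such call either reports a negative cycle — in which case we halt and return it — or produces valid potentials $\p_j$ under which, with high probability, every pair $(s,t)$ has $\bigO{h_j + \log n}$-hop betweenness at most $n/h_j$. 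We then form the combined potential by composing these: reweight by $\p_0$, then reweight the resulting graph by $\p_1$, and so on, so that $\varphi = \p_0 + \p_1 + \cdots$ is valid for $G$ by the composition rule for valid potentials stated in the preliminaries. A union bound over the $\bigO{\log n}$ scales keeps the overall failure probability small.

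The first key step after that is the \emph{monotonicity of betweenness under reweighting by a further valid potential}: if under $\ell_{\p_1,\dots,\p_j}$ the $\bigO{h_j}$-hop betweenness of $(s,t)$ is at most $n/h_j$, then it remains at most $n/h_j$ after additionally reweighting by $\p_{j+1},\p_{j+2},\dots$. This holds because betweenness counts vertices $v$ with $\hopd{h}{s,v} + \hopd{h}{v,t} < 0$, and reweighting by a valid potential $\psi$ shifts both $\hopd{h}{s,v}$ and $\hopd{h}{v,t}$ by telescoping potential differences whose sum along any concatenated $(s,v)$-then-$(v,t)$ walk is exactly $\psi(s) - \psi(t)$, independent of $v$; hence the sign of the sum, and the identity of the counted vertices, is unchanged. (One must check the $h$-hop distances are attained, or argue with infima, but this is routine.) So each $\p_j$'s guarantee survives all later reweightings, and $\varphi$ simultaneously satisfies all $\bigO{\log h}$ guarantees.

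The second key step is to extract the stated conclusion from these per-scale guarantees. Given $\eta \in \naturalnumbers$, pick $j$ so that $h_j = 2^j \in (\eta, 2\eta]$ (if $\eta \ge h$, take the top scale $j = \ceil{\log_2 h}$, so $h_j \ge h/2 \ge \eta/2$). Then $\eta + \bigO{\log n} \le h_j + \bigO{\log n}$, so by monotonicity in the hop parameter the $(\eta + \bigO{\log n})$-hop betweenness of any pair is at most its $\bigO{h_j + \log n}$-hop betweenness, which is at most $n/h_j < 2n\eta/h \cdot (h/(2h_j)) \le \bigO{n\eta/h}$ — and absorbing the constant (or tightening the per-scale target to $n/(ch_j)$ for a suitable constant) gives exactly $n\eta/h$. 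For the running time, the $j$-th call costs $\bigO{\mu (h_j + \log n)^2 \log n}$ by \reflemma{betweenness-reduction}; summing the geometric series in $h_j$ is dominated by the top term $h_j = \Theta(h)$, giving $\bigO{\mu h^2 \log n} = \bigO{(m + n\log n) h^2 \log n}$ in total — which, after the preprocessing normalization, is within the claimed $\bigO{(m+nh)h\log^2 n}$ bound; we should double-check the exact exponent bookkeeping here, as matching $(m+nh)h\log^2 n$ rather than a naive $\mu h^2\log n$ is the one place the accounting is delicate.

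The main obstacle I anticipate is precisely that last running-time reconciliation: the naive sum of single-scale costs gives an $h^2$ factor on the $\mu = m + n\log n$ term, whereas the lemma claims $(m+nh)h$, i.e.\ only $h^2$ on the $n$ part and $h$ on the $m$ part. Closing this gap presumably requires observing that at scale $h_j$ the betweenness-reduction subroutine can be run on a sparsified or reach-restricted graph (so the per-scale cost is $\bigO{m h_j \log^2 n}$ rather than $\bigO{m h_j^2 \log n}$), or that the $\bigO{\log n}$ additive term and the $\log$ factors can be rebalanced; this is the step I would expect the authors' proof to handle with a targeted argument rather than the black-box application of \reflemma{betweenness-reduction} sketched above.
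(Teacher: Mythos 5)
Your plan differs substantially from the paper's, and the central step --- the claimed monotonicity of betweenness under further reweighting --- is incorrect. If $\psi$ is a valid potential, then for fixed $s,t$ the reweighted quantity $\hopdp[\psi]{h}{s,v}+\hopdp[\psi]{h}{v,t}$ equals $\psi(s)-\psi(t)+\hopd{h}{s,v}+\hopd{h}{v,t}$: the $\psi(v)$ contributions telescope away, but the additive offset $\psi(s)-\psi(t)$ does \emph{not} vanish and need not have any particular sign. So what is preserved is the \emph{relative order} of vertices $v$ under the sum, not the zero-crossing; the set $\setof{v : \hopdp[\psi]{h}{s,v}+\hopdp[\psi]{h}{v,t}<0}$ can strictly grow when $\psi(s)<\psi(t)$. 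Consequently, after you reweight by $\p_{j+1},\dots,\p_L$, the scale-$j$ guarantee from \reflemma{betweenness-reduction} can be destroyed, and there is no reason the final potential $\varphi=\p_0+\cdots+\p_L$ simultaneously satisfies all $\bigO{\log h}$ per-scale bounds. This is exactly the issue the paper's proof is designed to avoid: it builds a single auxiliary graph $H$ containing $G^+$ together with arcs of length $\hopd{\eta_i}{\cdot,\cdot}$ to/from scale-$i$ samples $R_i$ for every scale $i$ at once, and computes one set of Johnson's potentials $\varphi$ for $H$. Because each arc $(s,r)$, $(r,t)$ is individually neutralized by $\varphi$, one gets $\hopdp{\eta_i}{s,r}\geq 0$ and $\hopdp{\eta_i}{r,t}\geq 0$ for every $r\in R_i$ simultaneously under the \emph{final} $\varphi$, and the rank argument then applies directly --- there is no composition across scales and hence no monotonicity claim to make.

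The second gap, which you correctly flag, is the running time: black-boxing \reflemma{betweenness-reduction} at each scale gives $\sum_j \bigO{\mu(2^j+\log n)^2\log n} = \bigO{\mu h^2\log n}$, whose $m$-term is $mh^2\log n$, a factor $\bigTheta{h/\log n}$ worse than the claimed $mh\log^2 n$. The paper gets the tight bound by making the sample sizes scale-dependent, $\sizeof{R_i}=\bigO{2^{L-i}\log n}$, so that the cost of computing the auxiliary arc lengths is $\bigO{\sum_i \sizeof{R_i}\eta_i\mu}=\bigO{\mu h\log^2 n}$, and by computing Johnson's potentials in $H$ once (with $\bigO{h\log n}$ hops since all negative arcs are incident to $\bigcup_i R_i$) rather than invoking a full $\bigO{h_j^2}$-time reduction at each scale. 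Neither fix is achievable by treating \reflemma{betweenness-reduction} as a black box, so both gaps would require reworking your argument along the lines the paper actually uses.
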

\begin{proof}
  The proof applies the same techniques as
  \reflemma{betweenness-reduction} to every scale from $1$ to $h$. We
  include the proof for the sake of completeness.  Let
  $L = \logup[2]{h}$. For $i \in \setof{0,\dots,L}$, let
  $\varh_i = 2^i + \bigO{\log n}$, and let $R_i$ sample
  $\bigO{2^{L - i} \log n}$ vertices uniformly at random.

  We construct an auxiliary graph $H$ starting from $G^+$. For
  $i \in \setof{0,\dots,L}$, $\varh_i = 2^i + \bigO{\log n}$,
  $v \in V$, and $r \in R_i$, we add arcs $(v,r)$ and $(r,v)$ of
  length $\hopd{\varh_i}{v,r}$ and $\hopd{\varh_i}{r,v}$,
  respectively.

  Let $\varphi$ be the Johnson's potentials for $H$. Since $H$
  contains $G^+$, $\varphi$ is valid for $G$. Moreover, for all
  $r \in R_i$ and $v \in V$, we have
  $\hopdp{\varh_i}{r,v} = \varphi(r) + \hopd{\varh_i}{r,v} -
  \varphi(v) = \varphi(r) + \len{r,v}_H - \varphi(v) \geq 0$, and
  similarly $\hopdp{\varh_i}{v,r} \geq 0$.

  To verify the $r$-hop betweenness, fix $s,t \in V$ and a hop
  parameter $\varh \leq h + \bigO{\log n}$. Let
  $i = \logup[2]{\varh}$. Rank all the vertices $x \in V$ in
  increasing order of $\hopd{\varh_i}{s,x} + \hopd{\varh_i}{x,t}$.
  With high probability, $R_i$ contains a vertex $x$ with rank at most
  $n/2^{L-i+1}$ between $s$ and $t$. Then
  $\hopdp{\varh_i}{s,x} + \hopdp{\varh_i}{x,t} \geq 0$. For all other
  vertices $y$ of higher rank, we have
  \begin{align*}
    \hopdp{\varh}{s,y} + \hopdp{\varh}{y,t}
    &\geq
      \hopdp{\varh_i}{s,y} + \hopdp{\varh_i}{y,t}
      = \pote{s} + \hopd{\varh_i}{s,y} + \hopd{\varh_i}{y,t}
      -\pote{t} \\
    &\geq \pote{s} + \hopd{\varh_i}(s,x) +
      \hopd{\varh_i}{x,t} - \pote{t}
      =  \hopdp{\varh_i}{s,x} + \hopdp{\varh_i}{x,y}
      \geq 0.
  \end{align*}
  Thus $(s,t)$ has $\varh_i$-hop betweenness at most
  $n/2^{L-i+1} \le 2^{i-1}n / h \leq n \varh / h$.  By the union bound,
  this holds for all $s,t \in V$ and all values of $\varh$ with high
  probability.

  As for the running time, for each $i$ and each $r \in R_i$, it takes
  two $\varh_i$-hop distance computations to compute the lengths of
  the edges incident to $r$. The total running time over all $i$ and
  $r \in R_i$ is
  \begin{math}
    \bigO{\sum_{i=0}^L \sizeof{R_i} \eta_i \mu} =
    \bigO{\mu h \log{n}^2}.
  \end{math}
  To compute Johnson's potentials in $H$, observe that every negative
  edge is incident to one of the $\bigO{h \log n}$ sampled vertices in
  $\bigcup_i R_i$, so $\bigO{h \log n}$ hops suffice.  $H$ also has
  $\bigO{m + h n \log n}$ edges. Therefore it takes
  $\bigO{m h \log n + nh^2 \log{n}^2}$ time to compute the Johnson's
  potentials.
\end{proof}

Let $h_0 = \bigO{\log n}$ and let $h \geq h_0$ be a parameter to be
determined.  Let $(s,U,t)$ be a weak $h_0$-hop negative sandwich where
$(s,t)$ has $\parof{\varh+h_0}$-hop betweenness $n\varh/h$ for all
$\varh$. By \reflemma{remotization}, we can reweight the graph with a
valid potential $\psi$ so that $U$ can $\varh$-hop negatively reach at
most $n \varh / h$ vertices for all $\varh$. The goal of the remainder
of this section is to compute an $h$-hop reducer for $G_{U,\psi}$. For
ease of notation we drop all negative vertices outside $U$ and let
$G = G_{U,\psi}$.

We take a bootstrapping approach that starts with a trivial $1$-hop
reducer for the $1$-hop negative reach of $U$, and builds out
$\eta/2$-hop reducers for the $\eta$-hop negative reach for increasing
values of $\eta$. To this end, let $L = \logup[2]{h} + 1$.  For each
$i \in \setof{1,\dots,L-1}$, let $V_i$ be the $2^i$-hop negative reach
of $U$, and let $G_i$ be the subgraph induced by $V_i$. $G_i$ contains
$\bigO{2^in / h}$ vertices and $\bigO{2^i m / h}$ edges. Let $V_L = V$
and $G_L = G$. For each $i$, let $d_i(\cdot, \cdot)$ denote distances
in $G_i$. Let $\outcut{V_i}$ be the directed cut of arcs from $V_i$ to
$V \setminus V_i$.

We construct $2^{i-1}$-hop reducers $H_i$ for each $G_i$
incrementally. Each bootstraps the next through the following distance
estimates:

\begin{definition}
  For $i \in \setof{1,\dots,L}$ and $s,t \in V_i$, we say that a value
  $\delta \in \reals$ is a \defterm{valid distance estimate} for
  $(s,t)$ at level $i$ if it satisfies the following:
  \begin{CompactMathProperties}
  \item \label{valid-1} $\delta \geq d_i(s,t)$.
  \item \label{valid-2} $\delta \leq \phopd{\varh}{s,t}_i$ for all
    $\varh \in [2^{i-1},2^i]$.
  \item \label{valid-3} $\delta + \hopd{0}{t,x}_i + \len{x,y} \geq 0$
    for all arcs $(x,y) \in \outcut{V_i}$
  \end{CompactMathProperties}
\end{definition}

We employ valid distance estimates $\delta_i(s,t)$ at level $i$ for
all $s \in U$ and $t \in \Heads$. \Cref{valid-1,valid-2}
ensure these values represent walks in $G_i$ that are competitive with
any proper $(s,t)$-walk having roughly $2^i$ hops. Counterintuitively,
our techniques struggle to compete with walks having significantly
fewer than $2^i$ hops --- they handle proper walks with more hops more
readily than those with fewer. (The proof of
\reflemma{hop-reduction->valid-estimates} clarifies why this
limitation arises.) \Cref{valid-3} preserves the negative
$2^i$-hop reach of $U$.

The construction is a bootstrapping process built on mutual recursion.
From valid distance estimates at levels below $i$, we construct a
$2^{i-1}$-hop reducer $H_i$ for $G_i$. From this hop reducer $H_i$, we
extract valid distance estimates at level $i$. Each hop reducer
bootstraps the next with double the hop reduction and double the
negative reach, culminating in $H_L$: an $h$-hop reducer for all of
$G$.

Formally we alternate between two lemmas. The first lemma constructs
hop reducers from valid distance estimates.

\begin{lemma}
  \labellemma{valid-estimates->hop-reduction} Let $i \in [L]$. Suppose
  we have a valid distance estimate $\delta_j(s,t)$ at level $j$ for
  all $j < i$, $s \in U$, and $t \in \Heads$. Then one can construct a
  $2^{i-1}$-hop reducer $H_i$ for $G_i$ with $n_i = \bigO{2^i n / h}$
  vertices and $m_i = \bigO{2^i m /h + \sizeof{U}^2 i}$ edges. The
  construction takes $\bigO{m_i + n_i \log n_i}$ time.
\end{lemma}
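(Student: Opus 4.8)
The plan is to adapt the layered-graph construction behind \reflemma{hop-reduction}, but with two modifications: the ``spine'' of the layers is built out of the bounded-size subgraphs $G_j$ rather than a single negative-reach set, and the long-range jumps across the layers are supplied by the valid distance estimates $\delta_j(s,t)$ instead of exit arcs alone. Concretely, I would build $H_i$ with layers $0,1,\dots,2^{i-1}$, where layer $0$ is a full copy of $G^+$ and each intermediate layer $\ell\ge 1$ is a copy of $G_i^+$ (the nonnegative part of the subgraph induced by $V_i$); negative edges of $U$ go from layer $\ell$ to layer $\ell+1$, self-arcs of length $0$ connect consecutive layers (wrapping around from the top layer back to layer $0$), and exit arcs $(x_\ell,y_0)$ of length $\len{x,y}$ are added for every $(x,y)\in\outcut{V_i}$ and every $\ell>0$, exactly as in \reflemma{hop-reduction}. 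The key new ingredient is ``shortcut'' arcs: for each $s\in U$, $t\in\Heads$, and each $j<i$, add an arc from $s$ in some layer to $t$ in the layer $2^{j-1}$ (or $2^{j}$) steps ahead, with length $\delta_j(s,t)$. Since one traversal of all $2^{i-1}$ layers of $H_i$ simulates $2^{i-1}$ hops of $G_i$, these shortcuts let a single traversal absorb a block of roughly $2^j$ hops in one arc, which is what lets the construction ``bootstrap'' from level $j$ to level $i$.

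The first thing to check is the reducer inequality $\distance{s,t}_{G_i}\le \hopd{\lceil k/2^{i-1}\rceil}{s,t}_{H_i}\le\hopd{k}{s,t}_{G_i}$. The left inequality is the ``soundness'' direction: every arc of $H_i$ — self-arcs, copied positive edges, negative edges, exit arcs, and crucially the shortcut arcs of length $\delta_j(s,t)$ — corresponds to a walk in $G_i$ of no greater length, using Property~\cref{valid-1} ($\delta_j(s,t)\ge d_i(s,t)$, so the shortcut does not underestimate a real distance; note we need $\delta$ to be a valid estimate \emph{at level $j$ inside $G_i$}, which holds because $G_j\subseteq G_i$ and estimates are monotone under adding negative vertices). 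The right inequality (``completeness'') is the heart of the argument: given a $k$-hop walk $W$ in $G_i$, chop it into $\lceil k/2^{i-1}\rceil$ consecutive blocks; within a block I want to route through one traversal of the layers of $H_i$, using the proper-hop structure so that whenever a block contains a sub-walk that is (competitive with) a proper $\eta$-hop walk between some $s\in U$ and $t\in\Heads$ with $\eta\in[2^{j-1},2^j]$, I replace it by the level-$j$ shortcut of length $\delta_j(s,t)$, invoking Property~\cref{valid-2}. This is where I expect the main obstacle: making the block decomposition and the choice of which shortcut to apply precise, in particular handling hops that are \emph{fewer} than $2^{i-1}$ without a shortcut to cover them (the ``counterintuitive'' limitation flagged before the lemma — the construction only has shortcuts at dyadic levels, so a short residual block must be handled directly by self-arcs and at most $2^{i-1}$ layer transitions, which is fine, but a block with very few hops scattered non-dyadically needs care) and ensuring the arithmetic of block counts gives exactly $\lceil k/2^{i-1}\rceil$ traversals. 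I would also need Property~\cref{valid-3} precisely here: it guarantees that every exit arc, when composed with the shortcut estimate leaving it, stays nonnegative, so that the potential $\varphi(v_\ell)=\hopd{\ell}{V,v}_{G_i}$ (analog of the Fineman construction) neutralizes everything except the wrap-around self-arcs — this is what keeps $H_i$ usable for a Johnson-style distance computation, though for the reducer \emph{definition} itself we only need the distance inequalities.

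For the size and time bounds: layer $0$ has $\bigO{n}$ vertices and $\bigO{m}$ edges, each of the $2^{i-1}$ higher layers has $\bigO{2^i n/h}$ vertices and $\bigO{2^i m/h}$ edges (since $G_i$ has that size by the standing assumption on $V_i$), giving $n_i=\bigO{n + 2^{i-1}\cdot 2^i n/h}=\bigO{2^i n/h}$ once $2^i\ge\sqrt h$ and otherwise absorbing the $\bigO n$ term — I should double-check the regime $2^i<h$ versus the claimed $n_i=\bigO{2^i n/h}$, which forces $n_i=\Omega(n)$ only when $2^i=\Theta(h)$; I suspect the intended reading is that layer $0$ is identified with $V_i$ too, or that the $\bigO{n}$ is harmless because $i=L$ is the only place we need the full graph. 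The exit arcs number $\bigO{2^{i-1}\cdot|\outcut{V_i}|}=\bigO{2^i m/h}$. The shortcut arcs number $\bigO{|U|\cdot|\Heads|\cdot i}=\bigO{|U|^2 i}$ since $|\Heads|\le|U|$, giving the $m_i=\bigO{2^i m/h+|U|^2 i}$ bound. The construction reads off each $\delta_j(s,t)$ in $\bigO{1}$ given the (recursively available) estimates, copies the subgraphs, and adds the arcs, so it runs in time linear in the output size, $\bigO{m_i+n_i\log n_i}$ (the log factor is just for any sorting/bucketing needed to lay out the layered adjacency structure). I would close by remarking that the mutual-recursion base case $i=1$ is the trivial $1$-hop reducer (layer $0$ is $G^+$, one extra layer is $G_1^+$, negative edges and exit arcs as above, no shortcuts needed), so the hypothesis ``valid estimates for all $j<i$'' is vacuous there and the induction is well-founded.
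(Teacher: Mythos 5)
Your construction does not match the paper's and, as stated, cannot achieve the claimed size bound. You propose a layered graph with one copy of $G^+$ at layer $0$ and a copy of $G_i^+$ at each of layers $1,\dots,2^{i-1}$. That gives $n_i = \bigO{n + 2^{i-1}\cdot 2^i n/h} = \bigO{n + 2^{2i-1}n/h}$, which is a factor of $\bigTheta{2^{i-1}}$ larger than the claimed $\bigO{2^i n/h}$ — for instance at $i=L$ (so $2^i = \bigTheta{h}$) your graph has $\bigOmega{hn}$ vertices rather than $\bigO{n}$. You notice this tension yourself but the speculative fixes you float (that the $\bigO{n}$ term is absorbed, or that the discrepancy only bites for $i < \log\sqrt{h}$) do not hold: the excess is multiplicative in $2^{i-1}$ and grows, not shrinks, with $i$. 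Similarly, if shortcut arcs are added from every starting layer $\ell$ to layer $\ell + 2^{j-1}$, you get $\bigTheta{2^{i-1}|U|^2 i}$ of them, again blowing past the $\bigO{|U|^2 i}$ edge bound.

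The missing idea is that the paper abandons the layered (``one layer per hop'') structure entirely. Instead $H_i$ consists of a \emph{single} copy of $G_j^+$ for each scale $j \le i$ — so there are $i+1$ copies, not $2^{i-1}+1$ — and the vertex/edge counts telescope as a geometric sum $\sum_{j\le i} \bigO{2^j n/h} = \bigO{2^i n/h}$. Shortcut arcs go from $s$ in the top copy $G_i^+$ down to $t$ in a lower copy $G_j^+$ (length $\delta_j(s,t)$), exit arcs $(x_j,y_i)$ and self-arcs $(v_j,v_i)$ route back up to $G_i^+$. After applying Johnson potentials $\varphi$ to the graph $H'$ with the self-arcs removed (which is valid because Property~\cref{valid-3} makes the shortcut arcs independent — any one-hop excursion into $G_j^+$ that escapes back to $V_i\setminus V_j$ is already nonnegative), the only surviving negative arcs in $H_\varphi$ are the self-arcs $(v_j,v_i)$. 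A single such hop represents an excursion from $G_i^+$ down to some $G_j^+$ and back, and — using Property~\cref{valid-2} and choosing the scale $j$ adaptively depending on how many hops the current block of the walk has and how quickly it escapes $V_j$ — that one hop absorbs anywhere from $2^{j-1}$ to $2^{i-1}$ hops of the original walk. The two cases in the paper's completeness argument (block stays inside every $G_j$ versus block escapes some $G_j$ early) are precisely the ``care'' you flag as needed; they are not handled by a layered embedding. In short, the multi-scale graph with one copy per scale is essential, not an optimization: without it neither the size bound nor the hop-reduction factor of exactly $2^{i-1}$ goes through.
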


\begin{proof}
  For $i = 1$, we return $G_1$. Suppose $i > 1$.

  We create an auxiliary graph $H$ starting with disjoint copies of
  $G_j^+$ for all $j \leq i$.  For each $j \leq i$ and each
  $v \in V_j$, let $v_j$ denote the copy of $v$ in $G_j^+$.  For each
  $j < i$, we add arcs between $G_i^+$ and $G_j^+$ as follows.  For
  $s \in U$ and $t \in \Heads$, we add a ``shortcut'' arc $(s_i,t_j)$
  with length $\delta_j(s,t)$. For each edge $(x,y) \in \outcut{V_j}$
  with $x \in V_j, y \in V_i$, we add an ``exit'' arc $(x_j,y_i)$ of
  length $\len{x,y}$.  Lastly, we add ``self-arcs'' $(v_{j},v_i)$ of
  length $0$ for all $j < i$ and $v \in V_j$.

  \begin{center}
    \includegraphics{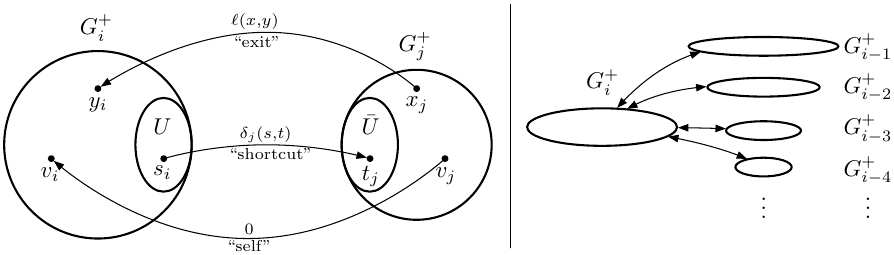}
  \end{center}

  Let $H'$ be the subgraph of $H$ obtained by dropping the self-arcs
  $(v_j,v_i)$ for all $j < i$ and $v \in V_j$.  We claim that for all
  $u, v \in V_i$, $\hopd{1}{u_i,v_i}_{H'} \geq 0$. (I.e., the negative
  edges in $H'$ are \emph{independent} in the sense of
  \cite{Fineman24}.)  Indeed, the only negative edges in $H'$ are of
  the form $(s_i,t_j)$ for $s \in U$, $t \in \Heads$, and $j < i$. Any
  $1$-hop walk $W$ that starts with such an edge $(s_i,t_j)$ and
  returns to $V_i \setminus V_j$ is nonnegative because, letting
  $(x_j,y_i) \in \outcut{V_j}$ be the exit arc where $W$ first leaves
  $V_j$, the prefix of $W$ from $s_i$ to $y_i$ has length at least
  \begin{align*}
    \delta_j(s,t) + \hopd{0}{t,x}_{j} + \len{x,y} \geq 0
  \end{align*}
  by \cref{valid-3} of valid distance estimates.

  Consequently, we can compute the Johnson's potentials
  $\varphi$ for $H'$ in
  $\bigO{m_i + n_i \log n_i}$. Moreover, $\varphi(x_i) = 0$ for all
  $x \in V_i$. Applying $\varphi$ to $H$, the only negative edges in
  $H_{\varphi}$ are the ones omitted from $H'$: namely, the self-arcs
  $(x_j,x_i)$ for all $j < i$ and $x \in V_j$.

  For fixed $j < i$, one can interpret $G_j^+$ and its boundary arcs
  as a compressed version of the layered graph constructed in
  \reflemma{hop-reduction}. The shortcut arcs effectively bypass the
  intermediate layers. The exit arcs and self-arcs here play a similar
  role to those in \reflemma{hop-reduction}. A key difference is that
  the ``$G_j^+$-gadget'' here only addresses proper walks through
  $V_j$ with at least $2^{j-1}$ hops and at most $2^j$ hops, whereas
  the layered graph would naturally handle any $2^j$-hop walk. For
  this reason, we have a $G_j^+$-gadget for every $j < i$, and we will
  have to be more careful when embedding walks from $G_i$ into
  $H_{\varphi}$ in the argument below.

  Formally, we claim that $H_{\varphi}$ is a $2^{i-1}$-hop reducer for
  $G_i$. That means,
  $d_i(s, t) \leq d_{H_\varphi}^{\lceil \eta / 2^{i-1} \rceil}(s_i, t_i) \leq d_i^{\eta}(s, t)$
  for any $\eta$ and $s, t \in V_i$.  For the first inequality, we have
  $d_{H_\varphi}^{\lceil \eta / 2^{i-1} \rceil}(s_i, t_i) \ge d_{H_\varphi}(s_i,t_i) = d_H(s_i,
  t_i) \geq d_{i}(s,t)$ for $s,t \in V_i$ because of \cref{valid-1} of
  valid distance estimates and $\varphi(s_i) = \varphi(t_i) = 0$.

  For the second inequality, consider any proper walk
  $W: s \leadsto t$ in $G_i$ from $s \in V_i$ to $t \in V_i$ and let
  $\varh$ be the number of hops in $W$. We claim there is a
  $\roundup{\varh/2^{i-1}}$-hop walk $W' : s_i \leadsto t_i$ in
  $H_\varphi$ with length at most the length of $W$. It suffices to
  prove the claim for $s \in U$.  We prove the claim by induction on
  the number of hops in $W$, where $\varh=0$ hops is immediate.

  For each index $j$, let $W_j$ be the maximal $2^j$-hop prefix of
  $W$, and let $\bar{W}_j$ be the remaining suffix.  We have two
  cases.

  \subparagraph{Case 1.} Suppose $W_j$ is contained in $G_j$ for all
  $j < i$.  Let $j = \min{\logup[2]{\varh}, i-1}$. We will embed $W_j$
  as a $1$-hop walk via the $G_j^+$-gadget, so to speak, using a
  reset edge to return to the base graph $G_i^+$.

  \begin{center}
    \includegraphics{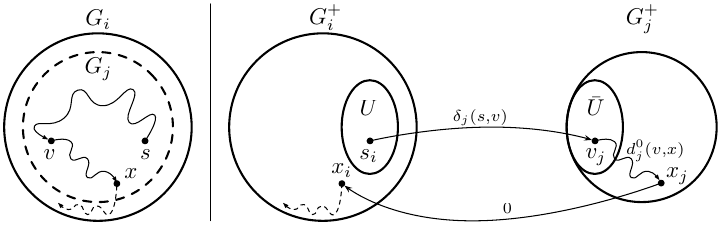}
  \end{center}

  Let $x \in V_j$ be the last vertex in $W_j$. Let
  $\varvarh = \min{\varh,2^{i-1}}$ be the number of hops in $W_j$. Let
  $v \in \Heads$ be the head of the last hop in $W_{j}$.  Consider the
  walk $W_{j}'$ going from $s_i$ to $v_j$ along the shortcut arc
  $(s_i,v_j)$, to $x_{j}$ along the shortest $(v,x)$-path in
  $G_{j}^+$, and to $x_i$ along the self-arc $(x_j,x_i)$. $W_{j}'$ has
  at most one hop, from the last arc $(x_j,x_i)$. Furthermore,
  \begin{align*}
    \len{W_j'}_{H,\varphi}
    &\tago{=} \len{W_j'}_H
      =
      \delta_{j}(s,v) + \hopd{0}{v,x}_j
      \tago{\leq}
      \phopd{\varvarh}{s,v}_{j} + \hopd{0}{v,x}_{j}
      \leq
      \len{W_j}.
  \end{align*}
  Here \tagr is because $\varphi(s_i) = \varphi(x_i) = 0$. \tagr is by
  \cref{valid-2} and $2^{j-1} \leq \varvarh \leq 2^j$.

  Meanwhile, either $W_j = W$; or $j = i-1$, $x \in U$, and
  $\bar{W}_j: x \leadsto t$ is a $(\varh-2^{i-1})$-hop walk in
  $G_i$. In the former, $W_j'$ is a 1-hop $(s_i,t_i)$-walk in
  $H_\varphi$ with length at most the length of $W_j = W$, as
  desired. In the latter, by induction on the number of hops, there is
  a $(\roundup{\varh/2^{i-1}} - 1)$-hop walk $\bar{W}_j'$ from $x_i$
  to $t_i$ with length
  $\len{\bar{W}_j'}_{H,\varphi} \leq \len{\bar{W}_j}$. Concatenated
  together, $W_j'$ and $\bar{W}_{j}'$ gives the desired
  $\roundup{\varh/2^{i-1}}$-hop $(s_i,t_i)$-walk.

  \subparagraph{Case 2.} Suppose $W_j$ is not contained in $G_j$ for
  some $j < i$. Let $j$ be the first index such that $W_{j}$ is not
  contained in $G_j$. Loosely speaking, we will route the prefix of
  $W_j$ contained in $G_j$ as a $0$-hop walk through the
  $G_j^+$-gadget, returning to the base graph $G_i^+$ along an exit
  arc just as $W_j$ leaves $G_j$.

  \begin{center}
    \includegraphics{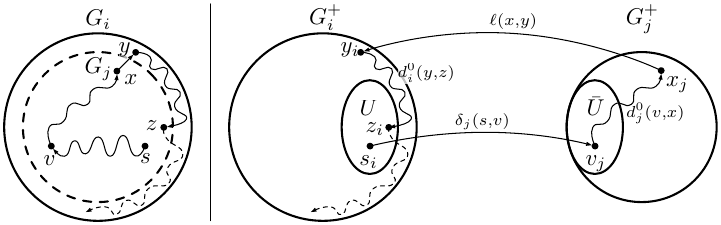}
  \end{center}

  Let $(x,y) \in \outcut{V_j}$ be the arc where $W_j$ first steps out
  of $V_j$. Let $z$ be the first negative vertex after $(x,y)$ if one
  exists; otherwise let $z = t$. Let $W_z: s \leadsto z$ be the prefix
  of $W$ up to $z$ and let $\bar{W}_z: z \leadsto t$ be the remaining
  suffix. Let $\varvarh \in [2^{j-1},2^j]$ be the number of hops in
  $W_z$. Let $u \in \Heads$ be the head of the last hop in $W_z$.
  Consider the walk $W_z': s_i \leadsto z_i$ in $H_\varphi$ going from
  $s_i$ to $u_j$ along the shortcut arc $(s_i,u_j)$, to $x_j$ along
  the shortest $(u, x)$-path in $G_j^+$, to $y_i$ along the exit arc
  $(x_j,y_i)$, and then to $z_i$ along the shortest $(y, z)$-path in
  $G_i^+$.  $W_z'$ has no hops, and
  \begin{align*}
    \len{W_z'}_{H,\varphi}
    &\tago{=} \len{W_z'}_{H} =
      \delta_j(s,u) + \hopd{0}{u,x}_j + \len{x,y} + \hopd{0}{y,z}_i
    \\
    &\tago{\leq}
      \phopd{\varvarh}{s,u} + \hopd{0}{u,x}_j + \len{x,y} +
      \hopd{0}{y,z}_i
      \leq
      \len{W_z},
  \end{align*}
  where \tagr is because $\pote{s_i} = \pote{z_i} = 0$, and \tagr is
  by \cref{valid-2} of valid distance measures.  Meanwhile, either
  $\bar{W}_z$ is empty, or $z \in U$ and $W_z$ is a
  $(\varh - \varvarh)$-hop walk. Either trivially in the former case
  or by induction in the latter, there is a
  $\roundup{(\varh-\varvarh)/2^{i-1}}$-hop walk
  $\bar{W}_z': z_i \leadsto t_i$ in $H_\varphi$ with
  $\len{\bar{W}_z'}_{H,\varphi} \leq \len{\bar{W}_z}$. Together,
  $W_z'$ and $\bar{W}_z'$ give the desired
  $\roundup{\varh/2^{i-1}}$-hop walk in $H_\varphi$.
\end{proof}

The second lemma uses the hop reducer $H_i$ for $G_i$ to create valid
distance estimates at level $i$.

\begin{lemma}
  \labellemma{hop-reduction->valid-estimates} Given a $2^{i-1}$-hop
  reducer $H_i$ for $G_i$ with $m_i$ edges and $n_i $ vertices, one
  can compute valid distance estimates $\delta_i(s,t)$ at level $i$ for all
  $s\in U$ and $t \in \Heads$ with high probability in
  \begin{math}
    \bigO{ \min{1, \log{n} / 2^i} \parof{ |U| \parof{m_i + n_i \log n_i} } }
  \end{math}
  randomized time.
\end{lemma}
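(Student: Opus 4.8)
The plan is to obtain each estimate from a constant‑hop shortest‑walk computation in $H_i$, corrected by a single extra term that enforces \cref{valid-3}. Concretely I would take, for $s \in U$ and $t \in \Heads$,
\[
  \delta_i(s,t) \defeq \max\setof{\hopd{2}{s,t}_{H_i},\; -\beta_i(t)},
  \qquad
  \beta_i(t) \defeq \min_{(x,y)\in\outcut{V_i}}\parof{\hopd{0}{t,x}_i + \len{x,y}},
\]
where $\hopd{2}{s,t}_{H_i}$ is computed by the hybrid Dijkstra/dynamic‑programming $2$‑hop distance computation in $H_i$, run once from each $s \in U$, and all values $\beta_i(\cdot)$ are computed by a single reverse Dijkstra over $G_i^+$ whose sources are the tails of $\outcut{V_i}$, with tail $x$ initialized to $\min\setof{\len{x,y}\where (x,y)\in\outcut{V_i}}$. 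Property~\cref{valid-1} is immediate from the left hop‑reducer inequality: $\delta_i(s,t) \geq \hopd{2}{s,t}_{H_i} \geq \dis{s,t}_{G_i}$. Property~\cref{valid-3} is immediate by construction: $\hopd{0}{t,x}_i + \len{x,y} \geq \beta_i(t)$ for every $(x,y)\in\outcut{V_i}$ while $\delta_i(s,t) \geq -\beta_i(t)$, so $\delta_i(s,t) + \hopd{0}{t,x}_i + \len{x,y} \geq 0$.

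The substance is Property~\cref{valid-2}, $\delta_i(s,t) \leq \phopd{\varh}{s,t}_i$ for $\varh \in [2^{i-1},2^i]$, which has to be checked for both terms of the maximum. For $\hopd{2}{s,t}_{H_i}$, the right hop‑reducer inequality with $k = 2^i$ gives $\hopd{2}{s,t}_{H_i} \leq \hopd{2^i}{s,t}_{G_i} \leq \hopd{\varh}{s,t}_{G_i} \leq \phopd{\varh}{s,t}_i$, using $\varh \leq 2^i$ and that a proper $\varh$‑hop walk is in particular a $\varh$‑hop walk. For $-\beta_i(t)$, fix a proper $\varh$‑hop walk $W : s \leadsto t$ in $G_i$ and an arc $(x,y)\in\outcut{V_i}$. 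The key observation is that $(x,y)$ is nonnegative: every negative edge has its head in $\Heads \subseteq V_i$, so it cannot cross the cut $\outcut{V_i}$. Hence appending to $W$ a shortest nonnegative $(t,x)$‑path in $G_i^+$ and then the arc $(x,y)$ produces a walk from $s \in U$ to $y$ with at most $\varh \leq 2^i$ hops; were its length $\len{W} + \hopd{0}{t,x}_i + \len{x,y}$ negative, $y$ would lie in the $2^i$‑hop negative reach $V_i$ of $U$, contradicting $(x,y)\in\outcut{V_i}$. Taking the infimum over $W$ shows $\phopd{\varh}{s,t}_i + \hopd{0}{t,x}_i + \len{x,y} \geq 0$ for every such $(x,y)$, i.e.\ $-\beta_i(t) \leq \phopd{\varh}{s,t}_i$, as required. (For $i = L$ we have $\outcut{V_L}=\emptyset$, $\beta_L \equiv +\infty$, $\delta_L(s,t) = \hopd{2}{s,t}_{H_L}$, and \cref{valid-3} is vacuous.)

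For the running time, the single reverse Dijkstra computing all $\beta_i(\cdot)$ costs $\bigO{m_i + n_i\log n_i}$, and the $\sizeof{U}$ constant‑hop searches on $H_i$ cost $\bigO{\sizeof{U}(m_i + n_i\log n_i)} = \bigO{2^i\sizeof{U}\mu/h + \sizeof{U}^3 i}$, matching the stated bound when $2^i = \bigO{\log n}$. The part I expect to be the main obstacle is the improvement by a factor of roughly $2^i/\log n$ when $2^i = \bigOmega{\log n}$. The plan there is to exploit that, since \cref{valid-3} has been offloaded onto $\beta_i$, the term $\hopd{2}{s,t}_{H_i}$ may be replaced by any value in $[\dis{s,t}_{G_i},\, \hopd{2^i}{s,t}_{G_i}]$; this slack should allow the $\sizeof{U}$ full searches on $H_i$ to be replaced by a randomized computation, in the spirit of \reflemma{implicit-proper-distance} and \reflemma{sampling}, that does not re‑traverse the heavy part of $H_i$ once per source, with a union bound over the $\bigO{\sizeof{U}^2}$ pairs $(s,t)$ certifying \cref{valid-2} with high probability. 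Pinning down this randomized refinement so that an $\bigO{\log n}$ overhead suffices is where the real care is needed, and it yields the claimed $\bigO{(\log n/2^i)(2^i\sizeof{U}\mu/h + \sizeof{U}^3 i)}$ bound in that regime.
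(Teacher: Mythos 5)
The structure of your proposal is sound: your definition splits the estimate into a hop‑reducer term and a boundary‑correction term $\beta_i(t)$ (the paper's $\lambda_i(t)$), your verification of \cref{valid-1} and \cref{valid-3} is correct, your observation that arcs in $\outcut{V_i}$ are nonnegative is correct, and the auxiliary reverse Dijkstra for $\beta_i$ matches the paper. But you have correctly diagnosed a genuine gap in your own argument, and the paper closes it with a specific idea you did not produce: \emph{midpoint sampling}. Instead of $\hopd{2}{s,t}_{H_i}$, the paper sets
\begin{math}
  \delta_i(s,t) = \max\!\left\{\min_{u \in U_0} \hopd{2}{s,u}_{H_i} + \hopd{2}{u,t}_{H_i},\; -\lambda_i(t)\right\},
\end{math}
where $U_0 \subseteq U$ samples each vertex independently with probability $\min\{1, O(\log n / 2^i)\}$. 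For each $u \in U_0$ one runs a forward and a reverse $2$‑hop computation in $H_i$, for a total of $O(|U_0|(m_i + n_i \log n_i))$, plus $O(|U|^2 |U_0|)$ to assemble the minima — exactly the claimed time, with no per‑source search over $H_i$.

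The reason this sampled quantity still certifies \cref{valid-2} is precisely the feature of the definition you noticed but did not exploit: \cref{valid-2} only demands competitiveness against proper walks with $\varh \in [2^{i-1}, 2^i]$ hops. Such a walk contains at least $2^{i-1}$ \emph{distinct} negative vertices, so $U_0$ hits one of them, say $u$, with high probability, and then $\hopd{2}{s,u}_{H_i} + \hopd{2}{u,t}_{H_i} \leq \hopd{2^i}{s,u}_i + \hopd{2^i}{u,t}_i \leq \len{W}$ by the hop‑reducer guarantee. Your attempted substitute — that $\hopd{2}{s,t}_{H_i}$ ``may be replaced by any value in $[\dis{s,t}_{G_i}, \hopd{2^i}{s,t}_{G_i}]$'' — is true, but it loosens the upper target rather than supplying a faster way to compute something inside that interval; it does not by itself suggest replacing the per‑source distance with a min over sampled intermediate vertices. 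The lower bound $\varh \geq 2^{i-1}$ is what makes midpoint sampling correct (it guarantees many candidates to hit), and this is the very reason, flagged in the paper's discussion, that the estimates deliberately do not try to compete with walks of few hops. So the definition and the easy properties in your write‑up are on track, but the core algorithmic content of the lemma — how to get the $\log n / 2^i$ speedup — is missing.
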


\begin{proof}
  For $s \in U$ and $t \in \Heads$, let
  \begin{align*}
    \delta_i(s,t) = \max{\min_{u \in U_0}
    \hopd{2}{s,u}_{H_i} + \hopd{2}{u,t}_{H_i}, -\lambda_i(t)},
  \end{align*}
  where $U_0 \subseteq U$ samples each negative vertex independently
  with probability $\min{1,\bigO{\log n / 2^i}}$, and
  \begin{align*}
    \lambda_i(t) = \inf{\hopd{0}{t,x}_i + \len{x,y} \where (x,y) \in \outcut{V_i}}.
  \end{align*}
  The sampling probability ensures that for any proper $(s,t)$-walk
  with $\bigTheta{2^i}$ hops, we sample at least one vertex from the
  walk with high probability, allowing us to compute distances via
  these midpoints rather than from every $s \in U$. We claim that
  $\delta_i(s,t)$ is a valid distance estimate for $(s,t)$ at level
  $i$ for all $s \in U$ and $t \in \Heads$ with high probability.

  Fix $s \in U$ and $t \in \Heads$.  \Cref{valid-1} follows from the
  fact that $d_{H_i}(x_i,y_i) \geq d_i(x,y)$ for all $x,y \in V_i$.

  Consider \cref{valid-2}.  Let $W$ be the minimum length proper
  $(s,t)$-walk in $G_i$ among all proper walks with at least $2^{i-1}$
  hops and at most $2^i$ hops. (If there is no such walk, then
  \cref{valid-2} automatically holds.)  Since $W$ contains at least
  $2^{i-1}$ distinct negative vertices, with high probability, $U_0$
  contains at least one negative vertex $u$ from $W$. We have
  \begin{align*}
    \hopd{2}{s,u}_{H_i} + \hopd{2}{u,t}_{H_i}
    \tago{\leq}
    \hopd{2^i}{s,u}_{i} + \hopd{2^i}{u,t}_{i}
    \leq
    \len{W}
  \end{align*}
  where \tagr is because $H_i$ is a $2^{i-1}$-hop reducer for $G_i$.
  Meanwhile, for all $(x,y) \in \outcut{V_i}$, we have
  \begin{align*}
    \hopd{2^i}{s,t}_i + \hopd{0}{t,x}_i + \len{x,y} \geq \hopd{2^i}{s,y}
    \geq 0,
  \end{align*}
  because $y$ is not $2^i$-hop negatively reachable from
  $U$. Thus
  \begin{align*}
    \hopd{2^i}{s,t}_i \geq \sup[(x,y) \in \outcut{V_i}]
    -\hopd{0}{t,x}_i - \len{x,y}
    =
    -\lambda_i(t).
  \end{align*}
  Altogether, we have
  \begin{align*}
    \delta_i(s,t) \leq \max{\len{W}, \hopd{2^i}{s,t}_i} \leq
    \phopd{\varh}{s,t}_i
  \end{align*}
  for all $\varh \in [2^{i-1}, 2^i]$.

  Lastly, for \cref{valid-3}, for $(x,y) \in \outcut{X}$, we have
  \begin{align*}
    \delta_i(s,t) + \hopd{0}{t,x}_i + \len{x,y}
    \geq
    \hopd{0}{t,x}_i + \len{x,y} - \lambda_i(t) \geq 0
  \end{align*}
  by choice of $\lambda_i(t)$.

  Thus $\delta_i(s,t)$ is a valid $(s,t)$-distance estimate at level
  $i$ with high probability for fixed $s \in U$ and $t \in
  \Heads$. For each $u \in U_0$, we compute $\hopd{2}{s,u}_{H_i}$ for
  all $s \in U$ and $\hopd{2}{u,t}_{H_i}$ for each $t \in \Heads$ in
  $\bigO{m_i + n_i \log n_i}$ time.  We also compute $\lambda_i(t)$
  for all $t \in \Heads$ by a single Dijkstra computation, by
  reversing $G_i^+$ and introducing an auxiliary vertex $\supersource$
  with arcs $(\supersource, x)$ of length $\len{x,y}$ for every edge
  $(x,y) \in \outcut{V_i}$. For all $t \in U$, $\lambda_i(t)$ is the
  distance from $\supersource$ to $t$ in this auxiliary graph.
  % Plugging in
  % $\sizeof{U_0} = \min{1, \bigO{\log{n} / 2^i}} \sizeof{U}$,
  % $m_i = \bigO{2^im/h + \sizeof{U}^2 i}$, and $n_i = \bigO{2^in / h}$
  % yields the claimed running time.
\end{proof}

Combining these subroutines, we obtain the following construction for
an $h$-hop reducer for $G_U$.

\begin{lemma}
  \labellemma{bootstrapped-hop-reducer} Let
  $h \geq \bigOmega{\log n}$.  Let $U$ be a set of negative vertices that
  can $\eta$-hop negatively reach at most $n \eta / h$ vertices for all
  $\eta$, with $\sizeof{U} \geq \bigOmega{\log^2 n + h / \log^2 n}$. Then
  one can compute an $h$-hop reducer for $G_U$ with high probability
  in
  \begin{math}
    \bigO{\sizeof{U} \mu \log{n}^2/ h + \sizeof{U}^3 \log^2 \log n}
  \end{math}
  randomized time.
\end{lemma}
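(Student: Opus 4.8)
The plan is to iterate the two subroutines \reflemma{valid-estimates->hop-reduction} and \reflemma{hop-reduction->valid-estimates} for $L = \logup[2]{h}+1$ levels, with the nested subgraphs $G_1,\dots,G_L = G_U$ set up as in the surrounding text ($V_i$ the $2^i$-hop negative reach of $U$, and $V_L = V$), and return the hop reducer $H_L$ produced at the last level. The hypothesis that $U$ can $\eta$-hop negatively reach at most $n\eta/h$ vertices for every $\eta$, together with the maximum in/out-degree bound $\bigO{m/n}$, guarantees that $G_i$ has $\bigO{2^i n/h}$ vertices and $\bigO{2^i m/h}$ edges, so the size bounds $n_i = \bigO{2^i n/h}$ and $m_i = \bigO{2^i m/h + \sizeof{U}^2 i}$ required by the two lemmas hold. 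For $i = 1,2,\dots,L$ in turn, I feed the valid distance estimates $\delta_j(\cdot,\cdot)$ already computed at all levels $j < i$ into \reflemma{valid-estimates->hop-reduction} to obtain a $2^{i-1}$-hop reducer $H_i$ for $G_i$, and then feed $H_i$ into \reflemma{hop-reduction->valid-estimates} to obtain valid distance estimates $\delta_i(s,t)$ at level $i$ for all $s \in U$, $t \in \Heads$. At the end $H_L$ is a $2^{L-1}$-hop reducer for $G_L = G_U$; since $2^{L-1} \geq h$, monotonicity of hop distances in the hop budget converts the $2^{L-1}$-hop reducer guarantee into an $h$-hop reducer guarantee (the defining inequalities at budget $h$ follow from those at budget $2^{L-1}$ evaluated at suitably rescaled hop counts), so $H_L$ is the desired $h$-hop reducer for $G_U$.

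Correctness is immediate once one checks that the recursion is well-founded: level $1$ needs no estimates (indeed \reflemma{valid-estimates->hop-reduction} returns $G_1$ directly for $i=1$), and level $i$ uses only estimates at levels $j<i$, which are in hand by the time level $i$ is processed. Since \reflemma{valid-estimates->hop-reduction} is deterministic and each of the $L = \bigO{\log n}$ invocations of \reflemma{hop-reduction->valid-estimates} succeeds with high probability, a union bound over the $L$ levels shows the whole construction succeeds with high probability.

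The running-time accounting is where the real work lies, and I expect it to be the only genuine obstacle. Summing \reflemma{valid-estimates->hop-reduction} over all levels, building all the $H_i$ costs $\sum_{i=1}^L \bigO{m_i + n_i\log n_i} = \sum_{i=1}^L \bigO{2^i\mu/h + \sizeof{U}^2 i}$; using $\sum_{i=1}^L 2^i = \bigO{2^L} = \bigO{h}$ and $\sum_{i=1}^L i = \bigO{L^2} = \bigO{\log^2 n}$, this is $\bigO{\mu + \sizeof{U}^2\log^2 n}$. Summing \reflemma{hop-reduction->valid-estimates} over all levels, producing all the estimates costs $\sum_{i=1}^L \bigO{\min{1,\log n/2^i}(2^i\sizeof{U}\mu/h + \sizeof{U}^3 i)}$; splitting the sum at $i_0 = \logup[2]{\log n}$, where the minimum switches value, and using the tail bound $\sum_{i \geq k} i\,2^{-i} = \bigO{k\,2^{-k}}$, the $2^i\sizeof{U}\mu/h$ part contributes $\bigO{\sizeof{U}\mu\log n/h}$ over $i \leq i_0$ and $\bigO{\sizeof{U}\mu\log^2 n/h}$ over $i > i_0$, while the $\sizeof{U}^3 i$ part contributes $\bigO{\sizeof{U}^3\log^2\log n}$ over $i \leq i_0$ and $\bigO{\sizeof{U}^3\log\log n}$ over $i > i_0$; altogether $\bigO{\sizeof{U}\mu\log^2 n/h + \sizeof{U}^3\log^2\log n}$. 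Combining the two contributions gives $\bigO{\mu + \sizeof{U}^2\log^2 n + \sizeof{U}\mu\log^2 n/h + \sizeof{U}^3\log^2\log n}$, and the two hypotheses on $\sizeof{U}$ are exactly what absorb the first two terms: $\sizeof{U} = \bigOmega{h/\log^2 n}$ gives $\mu = \bigO{\sizeof{U}\mu\log^2 n/h}$, and $\sizeof{U} = \bigOmega{\log^2 n}$ (hence $\sizeof{U} = \bigOmega{\log^2 n/\log^2\log n}$) gives $\sizeof{U}^2\log^2 n = \bigO{\sizeof{U}^3\log^2\log n}$, leaving the claimed $\bigO{\sizeof{U}\mu\log^2 n/h + \sizeof{U}^3\log^2\log n}$ bound. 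Everything beyond getting these geometric sums and the matching of the hypotheses exactly right is routine bookkeeping.
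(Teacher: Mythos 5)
Your proposal is correct and follows essentially the same bootstrapping structure and accounting as the paper's proof; the minor additions you make (explicitly verifying that a $2^{L-1}$-hop reducer with $2^{L-1}\geq h$ is an $h$-hop reducer by hop monotonicity, and explicitly showing how the hypotheses on $\sizeof{U}$ absorb the $\bigO{\mu + \sizeof{U}^2\log^2 n}$ term) are correct and useful, and the paper leaves these steps implicit. The only cosmetic difference is that the paper skips the final, unneeded invocation of \reflemma{hop-reduction->valid-estimates} at level $L$, whereas you run it; this does not affect the asymptotic bound.
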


\begin{proof}
  We apply
  \cref{lemma:valid-estimates->hop-reduction,lemma:hop-reduction->valid-estimates},
  alternately from $i = 1$ to $L$, omitting the last iteration of
  \reflemma{hop-reduction->valid-estimates}. The $i$th iteration of
  \reflemma{valid-estimates->hop-reduction} uses proper distance
  estimates for levels $j < i$ to create a $2^{i-1}$-hop reducer $H_i$
  for $G_i$, and the $i$th iteration of
  \reflemma{hop-reduction->valid-estimates} uses $H_i$ to create
  proper distance estimates for level $i$ (with high probability).
  The $L$th and final iteration of
  \reflemma{valid-estimates->hop-reduction} produces the desired
  $h$-hop reducer $H = H_L$ for $G_U$.

  For each $i$, $H_i$ has $n_i = \bigO{2^i n / h}$ vertices and
  \begin{math}
    m_i = \bigO{2^i m / h + \sizeof{U}^2 i}
  \end{math}
  edges. Summing over all $i$, the total time spent on
  \reflemma{valid-estimates->hop-reduction} is
  \begin{math}
    \bigO{\sum_{i=1}^{L} m_i + n_i \log n_i} %
    = %
    \bigO{\mu + \sizeof{U}^2 \log^2 n}.
  \end{math}

  For $i_0 = \roundup{\log[2] \log[2] n}$, iterations $1$ through $i_0$
  of \reflemma{hop-reduction->valid-estimates} takes
  \begin{align*}
    \sum_{i=1}^{i_0} \bigO{ |U| \parof{m_i + n_i \log n_i}}
    =
    \sum_{i=1}^{i_0} \bigO{\frac{\sizeof{U} \mu 2^i}{h} + \sizeof{U}^3i}
    =
    \bigO{\frac{\sizeof{U} \mu \log n}{h} + \sizeof{U}^3
    \log^2 \log n}.
  \end{align*}
  Iterations $i_0 + 1$ through $\logup[2]{h}$ takes
  \begin{align*}
    \bigO{\sum_{i=i_0+1}^{\logup[2]{h}} |U| \parof{m_i + n_i \log n_i}}
    &=
      \bigO{\sum_{i=i_0+1}^{\logup[2]{h}} \frac{\sizeof{U} \mu
      \log n}{h} + \frac{\sizeof{U}^3 i \log n}{2^i}} \\
    &=
      \bigO{\frac{\sizeof{U} \mu \log^2 n}{h} +
      \sizeof{U}^3 \sum_{i=1}^{\logup[2]{h} - i_0}
      \frac{i + \log \log n}{2^i}
      }
    \\
    &=
      \bigO{\frac{\sizeof{U} \mu \log^2 n}{h} + \sizeof{U}^3
      \log \log n}
  \end{align*}
  randomized time. Altogether, we spend
  \begin{math}
    \bigO{\sizeof{U} \mu \log{n}^2/ h + \sizeof{U}^3 \log^2
      \log n}
  \end{math}
  time on \reflemma{hop-reduction->valid-estimates}.
\end{proof}

\section[Algorithm for denser graphs]{Nearly $mn^{3/4}$ time in denser
  graphs}

\labelsection{dense}

We now integrate the bootstrapping techniques from
\refsection{bootstrap} into the previous algorithm from \cite{HJQ25}
to obtain faster overall running times for the shortest path
problem. Our algorithm differs depending on the density of the graph
and this section assumes the denser regime where
$m \geq \bigOmega{n^{5/4} \log^2 \log n / \log^{3/4} n}$. We describe
a single iteration of an algorithm that, with high probability,
neutralizes negative vertices in
$\bigO{m \log{n}^{7/4} / \numN^{1/4}}$ amortized time per negative
vertex. Repeating this iteration yields the claimed
$\bigO{m n^{3/4} \log^{7/4} n}$ randomized running time.

Let $h = \numN^{1/4} \log^{1/4} n$ and $h_0 = \bigO{\log n}$. Note
that $m \geq nh \log^2 \log{n} / \log{n}$.  By
\reflemma{multiscale-betweenness-reduction}, we reduce the
$(\varh + h_0)$-hop betweenness to $n \varh / h$, for all
$\varh \leq h$, in
\begin{math}
  \bigO{m h \log^2 n}
\end{math}
time.  By \reflemma{sampling}, in $\bigO{\mu \log n}$ randomized time
and with high probability, compute either (a) Johnson's potentials
neutralizing a subset of $\bigOmega{\sqrt{\numN \log n}}$ negative
edges (as desired), (b) a negative cycle (and return), or (c) a weak
$h_0$-hop negative sandwich $(s,U,t)$ with
$\sizeof{U} = \bigOmega{\sqrt{\numN \log n}}$.  Continuing on in the
latter case, and dropping negative vertices from $U$ as needed, we
assume $\sizeof{U} = \bigTheta{\sqrt{\numN \log n}}$.

By \reflemma{remotization}, we reweight the graph so that $U$ can
$\varh$-hop negatively reach at most $n \varh/h$ vertices for all
$\varh \leq h$ in $\bigO{h_0 \mu}$ time. We then apply
\reflemma{bootstrapped-hop-reducer} to $U$. \reflemma{bootstrapped-hop-reducer} takes
\begin{math}
  \bigO{\sizeof{U} \mu \log{n}^2/h}
\end{math}
randomized time because
\begin{math}
  \sizeof{U}^2 \log^2 \log n \leq \bigO{\numN \log{n} \log^2 \log n} \leq \bigO{m
    \log{n}^2 / h}.
\end{math}
Finally, we compute Johnson's potentials for $G_U$ via $H$ in
\begin{math}
  \bigO{\frac{\sizeof{U}}{h} \parof{\mu + \sizeof{U}^2 \log h}}
  =
  \bigO{\sizeof{U} \mu / h}
\end{math}
time.

Altogether, we spend
\begin{align*}
  \bigO{\mu h \log^2 n + \frac{\sizeof{U}\mu \log^2 n}{h}} = \bigO{\mu \log^2 n \parof{h +
  \frac{\sqrt{\numN \log n}}{h}}}
\end{align*}
time to neutralize $U$. The parameter $h$ balances these two terms,
giving $\bigO{m \log{n}^2/ h} = \bigO{m \log{n}^{7/4} / \numN^{1/4}}$
amortized time per negative vertex neutralized.

\section[Algorithm for sparser graphs]{Nearly $m^{4/5} n$ time in
  sparser graphs}

\labelsection{sparse}

We now turn to the page to sparse graphs with $m \leq
\apxO{n^{5/4}}$. The previous algorithm from \refsection{dense} has
two bottlenecks in this regime that we will address.

First, the betweenness reduction introduced in
\reflemma{multiscale-betweenness-reduction} for bootstrapping hop
reducers spends $\apxO{mh + nh^2}$ time. When the graph becomes
sparse, it takes $\apxO{nh^2} = \apxO{n \sqrt{k}}$ time to eventually
neutralize $\apxO{\sqrt{k}}$ edges --- yielding $\apxO{n}$ time per
edge instead of $\apxO{m / h}$.

Second, the hop reduction graphs
$H_i$ have $\sizeof{U}^2 = \bigOmega{\numN}$ shortcut edges, which for
$i$ small and $k \approx n$ dominates the $\bigO{m 2^i / h}$ edges
from the underlying graph --- incurring $\apxO{n}$ time per negative
edge when computing distances through $H_i$.

\paragraph{Sparse betweenness reduction.}
Consider the first bottleneck from betweenness reduction, which
currently takes $\apxO{mh + nh^2}$ time. We want to eliminate the
$\apxO{nh^2}$ term.

The $\apxO{nh^2}$ term comes from an $h$-hop distance computation in a
$\apxO{m + nh}$-edge auxiliary graph. The $\apxO{nh}$ edges breaks
down as $\apxO{h}$ edges per negative vertex, and every negative
vertex has these auxiliary edges in order to ensure that all pairs of
negative vertices have low betweenness.

The analysis only requires low betweenness for the negative vertices
that are sampled.  Therefore, instead of lowering the betweenness for
all pairs of negative vertices before sampling, we first sample the negative
vertices, and then lower the betweenness only for pairs of sampled
vertices. Limiting the betweenness reduction to sampled pairs
removes the $\apxO{nh}$ extra edges from the auxiliary graph, and the
$\apxO{nh^2}$ term from the running time.

However, the betweenness now depends on the sampled vertices, and we
can no longer apply \reflemma{sampling} as a black box. The following
lemma integrates
\cref{lemma:betweenness-reduction,lemma:implicit-proper-distance,lemma:sampling}
together with a more aggressive probabilistic analysis to achieve the
same overall effect in much less time.

\begin{lemma}
  \labellemma{sparse-betweenness-reduction} Let
  $h_0 = \bigTheta{\log n}$, $h \geq h_0$, and
  $q \leq \bigO{\numN / \log{n}}$. One can compute, with high probability
  in $\bigO{h \log{n}^2 \parof{\mu + h \numN / q}}$ randomized
  time, either:
  \begin{compactmathresults}
  \item A negative cycle.
  \item Valid potentials $\varphi$ neutralizing a set of
    $\bigOmega{\numN/q}$ negative vertices.
  \item Valid potentials $\varphi$ and a weak $h_0$-hop negative sandwich
    $(s,U,t)$ in $G_{\varphi}$ such that:
    \begin{compactmathproperties}
    \item $(s,t)$ has $(\varh+h_0)$-hop betweenness $n\varh/h$ for all $\varh$.
    \item $\sizeof{U} \geq \bigOmega{q h_0}$.
    \end{compactmathproperties}
  \end{compactmathresults}
\end{lemma}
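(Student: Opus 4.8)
The plan is to subsample the negative vertices once, run a betweenness reduction restricted to pairs \emph{within} the sample (this is what makes it cheap), and then mine the sample with \reflemma{implicit-proper-distance}: either it detects a negative cycle, or it certifies that the whole sample can be neutralized, or it produces a short proper negative walk between two sampled vertices whose midpoints a Chernoff bound blows back up into a large sandwich. First I would sample $R \subseteq N$, including each negative vertex independently with probability $\bigTheta{1/q}$; since $q \leq \bigO{\numN / \log n}$, a Chernoff bound gives $\sizeof{R} = \bigTheta{\numN/q} \geq h_0$ with high probability.

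Second, run a multiscale betweenness reduction in the spirit of \reflemma{betweenness-reduction} and \reflemma{multiscale-betweenness-reduction}, but targeting only pairs in $R \times R$. Build $H$ from $G^+$; for each dyadic scale $\eta_i$ over the range $[h_0, h]$ (an additive $h_0$ offset as in \reflemma{multiscale-betweenness-reduction}, so $\bigO{\log n}$ scales), sample $Q_i \subseteq V$ uniformly with $\sizeof{Q_i} = \bigO{(h/\eta_i)\log n}$, and for every $s,t \in R$ and $q \in Q_i$ add arcs $(s,q)$ and $(q,t)$ of lengths $\hopd{\eta_i}{s,q}$ and $\hopd{\eta_i}{q,t}$. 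Let $\varphi$ be Johnson's potentials for $H$ (or output a negative cycle). Three points of bookkeeping: (i) $H \supseteq G^+$, so $\varphi$ is valid for $G$; (ii) every negative arc of $H$ is incident to $\bigcup_i Q_i$, a set of $\bigO{h \log n}$ vertices, so no shortest walk in $H$ traverses more than $\bigO{h\log n}$ negative arcs and Johnson's potentials can be computed with $\bigO{h\log n}$ hops; (iii) by the usual rank argument, for every $s,t \in R$ and $\eta$, some $q \in Q_i$ has rank $\bigO{n\eta/h}$ between $s$ and $t$ at a scale $\eta_i \geq \eta + h_0$, and since $\varphi$ neutralizes the arcs $(s,q),(q,t)$ we get $\hopdp{\eta}{s,v} + \hopdp{\eta}{v,t} \geq \hopdp{\eta_i}{s,q} + \hopdp{\eta_i}{q,t} \geq 0$ for every higher-rank $v$, so $(s,t)$ has $(\eta+h_0)$-hop betweenness $\bigO{n\eta/h}$ in $G_\varphi$. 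The cost is $\bigO{\sum_i \sizeof{Q_i}\eta_i\mu} = \bigO{h\mu\log^2 n}$ for the edge-length computations plus $\bigO{h\log n}$ hops of Johnson over a graph with $\bigO{m + \sizeof{R}h\log n}$ edges, that is $\bigO{h\log^2 n(\mu + \sizeof{R}h)} = \bigO{h\log^2 n(\mu + h\numN/q)}$; the gain over \reflemma{multiscale-betweenness-reduction} is exactly that we add only $\bigO{\sizeof{R}h\log n}$ arcs in place of $\bigO{nh\log n}$, avoiding the $\apxO{nh^2}$ bottleneck. Replace $G$ by $G_\varphi$.

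Third, apply \reflemma{implicit-proper-distance} to $R$ with hop parameter $h_0$, in $\bigO{h_0\mu}$ time. If it returns a negative cycle, output (a). If it returns the distances $d_R(V,\cdot)$ in the reweighted graph, then composing with $\varphi$ yields valid potentials neutralizing all $\bigTheta{\numN/q}$ negative edges of $R$, which is (b). Otherwise it returns a pair $s,t \in R$ with $\hopdp{h_0}{s,t}_R < 0$, so there is a proper walk $W$ in $G_R$ with exactly $h_0$ distinct negative edges and $\len{W} < 0$. For each of the $h_0$ negative vertices $u$ of $W$, the prefix of $W$ up to $u$ and the suffix from $u$ each have at most $h_0$ hops, so $\hopdp{h_0}{s,u} + \hopdp{h_0}{u,t} \leq \len{W} < 0$. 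Compute $\hopdp{h_0}{s,\cdot}$ and $\hopdp{h_0}{\cdot,t}$ by two $h_0$-hop distance computations and set $U = \setof{u \in N : \hopdp{h_0}{s,u} + \hopdp{h_0}{u,t} \leq 0}$. Then $(s,U,t)$ is a weak $h_0$-hop negative sandwich in $G_\varphi$, $(s,t)$ has the betweenness bound from the previous step (property (i)), and $\sizeof{U \cap R} \geq h_0$ since $U$ contains the $h_0$ negative vertices of $W$. Output $(s,U,t)$ as case (c).

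It remains to argue $\sizeof{U} = \bigOmega{qh_0}$, the probabilistic heart. Work with $G$ \emph{before} the reweighting by $\varphi$: for $a,b \in N$ and $c \in \reals$ let $T_{a,b,c} = \setof{u \in N : \hopd{h_0}{a,u} + \hopd{h_0}{u,b} \leq c}$. As $c$ ranges over the $\bigO{\numN}$ distinct values of $\hopd{h_0}{a,u} + \hopd{h_0}{u,b}$, each pair contributes only $\bigO{\numN}$ such sets, so there are $\bigO{\numN^3}$ of them and all are independent of $R$. If $\sizeof{T_{a,b,c}} < qh_0/2$, its expected intersection with $R$ has size $< h_0/2$, so by a Chernoff bound the probability that $\sizeof{T_{a,b,c} \cap R} \geq h_0$ is $e^{-\bigOmega{h_0}}$; taking $h_0 = \bigTheta{\log n}$ with a sufficiently large constant and union-bounding, with high probability no threshold set of size $< qh_0/2$ meets $R$ in $h_0$ or more vertices. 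Since $\varphi$ is valid and negative edges are tracked throughout the reweighting, $\hopdp{h_0}{s,u} + \hopdp{h_0}{u,t} = (\varphi(s) - \varphi(t)) + \hopd{h_0}{s,u} + \hopd{h_0}{u,t}$, so $U = T_{s,t,\,\varphi(t) - \varphi(s)}$ in the original $G$; combined with $\sizeof{U \cap R} \geq h_0$, the high-probability event forces $\sizeof{U} \geq qh_0/2 = \bigOmega{qh_0}$. Summing the costs of the three steps gives $\bigO{h\log^2 n(\mu + h\numN/q)}$. The step I expect to be the main obstacle is the second one: making the restricted betweenness reduction rigorous --- simultaneously getting the multiscale guarantee for all pairs in $R \times R$, bounding Johnson's computation to $\bigO{h\log n}$ hops, and pinning down the edge count --- together with confirming, in the fourth step, that the family $\setof{T_{a,b,c}}$ over which we union-bound is fixed independently of the randomness defining $R$.
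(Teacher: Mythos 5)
Your proposal is correct and takes essentially the same route as the paper's proof: subsample $R$ at rate $\Theta(1/q)$, restrict the multiscale betweenness reduction to auxiliary arcs incident to $R$ (so the auxiliary graph has $\bigO{m + h\numN\log n/q}$ edges rather than $\bigO{m + nh\log n}$), apply \reflemma{implicit-proper-distance} to $R$, and lower-bound $\sizeof{U}$ by observing that reweighting by $\varphi$ shifts every $h_0$-hop sum by the constant $\varphi(s)-\varphi(t)$, so $U$ coincides with one of $\bigO{\numN^3}$ threshold sets defined before $R$ is drawn, and a Chernoff plus union bound shows no small threshold set contains $\bigOmega{h_0}$ sampled vertices. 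The paper makes the identical moves with the identical cost accounting, calling your $R$ its $S$ and your $T_{a,b,c}$ its $U_\lambda(a,b)$.
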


\begin{proof}
  Let $S \subseteq N$ sample each negative vertex independently with
  probability $1/q$.  $\sizeof{S} = \bigOmega{\numN/q}$ with high
  probability, and henceforth we assume this is the case. We repeat
  the same steps as in the proof of
  \reflemma{multiscale-betweenness-reduction}, except we only add arcs
  incident to $S$ when building the auxiliary graph $H$. $H$ now has
  $\bigO{m + h k \log{n} / q}$ edges, and computing the Johnson's
  potentials $\varphi$ takes only
  $\bigO{h \log{n} \parof{m + hk \log{n} / q}}$ time.  For all
  $r \in R_i$ and $s \in S$, we have $\hopdp{2^i}{r,s} \geq 0$ and
  $\hopdp{2^i}{s,r} \geq 0$.

  Applying \reflemma{implicit-proper-distance} to $S$, in
  $\bigO{h_0 \mu}$ time, we obtain either (a) a negative cycle, (b)
  potentials neutralizing $S$, or (c) a pair $s,t \in V$ with
  $\phopd{h_0}{s,t} < 0$. In the latter case, we claim that with high
  probability, the weak $h_0$-hop negative sandwich
  $U = \setof{x \in N \where \hopdp{h_0}{s,x} + \hopdp{h_0}{x,t} \leq
    0}$ has $\bigOmega{q h_0}$ vertices.

  For every pair $s,t \in N$, and value $\lambda$, let
  $U_{\lambda}(s,t) = \setof{x \in N \where \hopd{h_0}{s,x} +
    \hopd{h_0}{x,t} \leq \lambda}$. We call $U_{\lambda}(s,t)$ a
  \defterm{sandwich}.  As $\lambda$ ranges from $-\infty$ to
  $+\infty$, $U_{\lambda}(s,t)$ is monotonically increasing, and takes
  on at most $\numN$ distinct sets.  Call the sandwich
  $U_{\lambda}(s,t)$ \defterm{small} if
  $\sizeof{U_{\lambda}(s,t)} \leq c q h_0$ for a sufficiently small
  constant $c$, and \defterm{big} otherwise.  We say that the random
  sample $S$ \emph{hits} a sandwich $U_{\lambda}(s,t)$ if $S$ contains
  at least $h_0 - 1$ vertices from $U_{\lambda}(s,t)$.  For any small
  sandwich $U_{\lambda}(s,t)$, $S$ expects to contain at most $c h_0$
  vertices from $U_{\lambda}(s,t)$. Since $h_0 = \bigOmega{\log n}$,
  by standard concentration bounds, $S$ does not hit
  $U_{\lambda}(s,t)$ with high probability. Taking the union bound
  over all $\bigO{\numN^3}$ small sandwiches, $S$ only hits big
  sandwiches with high probability. Henceforth assume this is the
  case.

  Consider the negative sandwich $U$ between $s$ and $t$ returned by
  \reflemma{implicit-proper-distance}. Observe that
  \begin{align*}
    U = \setof{x \in N \where \pote{s} + \hopd{h_0}{s,x} + \hopd{h_0}{x,t} -
    \pote{t} \leq 0}
    =
    U_{\pote{t} - \pote{s}}(s,t).
  \end{align*}
  Additionally, $\phopdp{h_0}{s,t} < 0$ implies that $S$ hits
  $U$. Thus $U$ is a big sandwich with $\sizeof{U} = \bigOmega{qh_0}$
  vertices.
\end{proof}

\paragraph{Putting it all together.}
The remaining bottleneck comes from the $\apxOmega{\sizeof{U}^2}$
shortcut edges dominating the size of the smaller hop-reducers
$H_i$. We do not modify our construction to address this bottleneck,
but instead toggle the parameters to balance out the $\sizeof{U}^2$
term.

Assume $m \leq \bigO{n^{5/4} \log^{5/4} n}$. Let
$h_0 = \bigO{\log n}$.  We repeat the same iteration described in
\refsection{dense}, except with
\reflemma{sparse-betweenness-reduction} in place of
\cref{lemma:betweenness-reduction,lemma:implicit-proper-distance,lemma:sampling},
and different choices of $h$ depending on $\numN$.

In the first regime we have
$\mu \leq \numN^{5/4} \log^2 \log{n} / \log^{3/4} n$.  Let
$h = \parof{\mu \log{n}^2 / \log^2 \log n}^{1/5}$ and
$q = \sqrt{\mu / h \log^2 \log n}$.
\Reflemma{sparse-betweenness-reduction} takes $\bigO{h \mu \log^2 n}$
time.  Because $k / q \geq q h_0$, the bottleneck is in the third
outcome of \reflemma{sparse-betweenness-reduction}, where we obtain an
$h_0$-hop negative sandwich $(s,U,t)$ of size
$\sizeof{U} \geq \bigOmega{qh_0}$.  We assume
$\sizeof{U} = \bigTheta{q h_0}$ by dropping vertices as needed.  Note
that by choice of $q$,
\begin{math}
  \sizeof{U}^2 %
  = \bigTheta{q^2 h_0^2} %
  = \bigTheta{\mu \log{n}^2 / h \log^2 \log n}. %
\end{math}
By \reflemma{remotization}, we reweight the graph so that $U$ can
$\varh$-hop negatively reach at most $n \varh/h$ vertices for all
$\varh \leq h$ in $\bigO{h_0 \mu}$ time. We then apply
\reflemma{bootstrapped-hop-reducer} to $U$ in
\begin{math}
  \bigO{\sizeof{U} \mu \log^2n / {h}}
\end{math}
time. Finally, we use $H$ to compute Johnson's potentials for
$G_U$ in
\begin{math}
  \bigO{\frac{\sizeof{U}}{h} \parof{\mu + \sizeof{U}^2 \log h }}
\end{math}
time.  Overall, we spend
$\bigO{\mu \log^2 n \parof{h + \sizeof{U} / h}}$ time neutralizing
$U$. As $h$ balances these terms, we conclude that we spend
$\bigO{\mu \log^2 n / h} = \bigO{\mu^{4/5} \log^{8/5} n \log^{2/5}
  \log n}$ time per neutralized vertex.

In the second regime, we have
$\mu \geq \numN^{5/4} \log^2 \log n / \log^{3/4} n$. Let
$h = \numN^{1/4} \log^{1/4} n$ and $q = \sqrt{\numN / \log n}$, the
same as in the denser setting of \refsection{dense}. Retracing the
same calculations from \refsection{dense} shows that we either return
a negative cycle or spend
$\bigO{h \mu \log{n}^2 + \sizeof{U} \mu \log^2 n / h}$ time to
neutralize $\bigOmega{\sqrt{\numN \log n}}$ negative edges. As before,
$h$ balances these two terms, giving
$\bigO{\mu \log{n}^{7/4} / \numN^{1/4}}$ time per neutralized edge.

Any value of $\numN$ falls in either of the two regimes. Thus, by one
method or another, we are able to neutralize vertices at a rate of
\begin{math}
  \bigO{\mu \log{n}^{7/4} / \numN^{1/4} + \mu^{4/5} \log^{8/5} n
    \log^{2/5} \log n}
\end{math}
randomized time per negative vertex. Repeating this until (almost)
all negative edges are neutralized takes
\begin{math}
  \bigO{\mu n^{3/4} \log{n}^{7/4} + \mu^{4/5} n \log^{8/5} n \log^{2/5}
    \log n}
\end{math}
randomized time.

\begin{remark}
  We did not try to optimize logarithmic factors in favor of a simpler
  exposition. We believe they can be improved. Nonetheless we have
  accounted for and balanced the logarithmic factors of the presented
  algorithms for the interested reader.
\end{remark}

\printbibliography

@InProceedings{BCF23,
  author       = {Karl Bringmann and Alejandro Cassis and Nick
                  Fischer},
  title        = {Negative-Weight Single-Source Shortest Paths in
                  Near-Linear Time: Now Faster!},
  year         = 2023,
  booktitle    = {64th {IEEE} Annual Symposium on Foundations of
                  Computer Science, {FOCS} 2023, Santa Cruz, CA, USA,
                  November 6-9, 2023},
  pages        = {515-538},
  doi          = {10.1109/FOCS57990.2023.00038},
  url          = {https://doi.org/10.1109/FOCS57990.2023.00038},
  crossref     = {DBLP:conf/focs/2023},
  timestamp    = {Tue, 02 Jan 2024 15:09:54 +0100},
  biburl       = {https://dblp.org/rec/conf/focs/BringmannCF23.bib},
  bibsource    = {dblp computer science bibliography,
                  https://dblp.org}
}

@InProceedings{BNW22,
  author       = {Aaron Bernstein and Danupon Nanongkai and Christian
                  Wulff{-}Nilsen},
  title        = {Negative-Weight Single-Source Shortest Paths in
                  Near-linear Time},
  year         = 2022,
  booktitle    = {63rd {IEEE} Annual Symposium on Foundations of
                  Computer Science, {FOCS} 2022, Denver, CO, USA,
                  October 31 - November 3, 2022},
  pages        = {600-611},
  doi          = {10.1109/FOCS54457.2022.00063},
  url          = {https://doi.org/10.1109/FOCS54457.2022.00063},
  crossref     = {DBLP:conf/focs/2022},
  timestamp    = {Sat, 31 Dec 2022 17:18:44 +0100},
  biburl       = {https://dblp.org/rec/conf/focs/BernsteinNW22.bib},
  bibsource    = {dblp computer science bibliography,
                  https://dblp.org}
}

@proceedings{DBLP:conf/focs/2022,
  title        = {63rd {IEEE} Annual Symposium on Foundations of Computer Science, {FOCS}
                  2022, Denver, CO, USA, October 31 - November 3, 2022},
  publisher    = {{IEEE}},
  year         = {2022},
  url          = {https://doi.org/10.1109/FOCS54457.2022},
  doi          = {10.1109/FOCS54457.2022},
  isbn         = {978-1-6654-5519-0},
  timestamp    = {Sun, 30 Jun 2024 00:40:05 +0200},
  biburl       = {https://dblp.org/rec/conf/focs/2022.bib},
  bibsource    = {dblp computer science bibliography, https://dblp.org}
}

@proceedings{DBLP:conf/focs/2023,
  title        = {64th {IEEE} Annual Symposium on Foundations of Computer Science, {FOCS}
                  2023, Santa Cruz, CA, USA, November 6-9, 2023},
  publisher    = {{IEEE}},
  year         = {2023},
  url          = {https://doi.org/10.1109/FOCS57990.2023},
  doi          = {10.1109/FOCS57990.2023},
  isbn         = {979-8-3503-1894-4},
  timestamp    = {Sun, 30 Jun 2024 16:44:48 +0200},
  biburl       = {https://dblp.org/rec/conf/focs/2023.bib},
  bibsource    = {dblp computer science bibliography, https://dblp.org}
}

@proceedings{DBLP:conf/stoc/2024,
  editor       = {Bojan Mohar and
                  Igor Shinkar and
                  Ryan O'Donnell},
  title        = {Proceedings of the 56th Annual {ACM} Symposium on Theory of Computing,
                  {STOC} 2024, Vancouver, BC, Canada, June 24-28, 2024},
  publisher    = {{ACM}},
  year         = {2024},
  url          = {https://doi.org/10.1145/3618260},
  doi          = {10.1145/3618260},
  timestamp    = {Sun, 30 Jun 2024 00:32:18 +0200},
  biburl       = {https://dblp.org/rec/conf/stoc/2024.bib},
  bibsource    = {dblp computer science bibliography, https://dblp.org}
}

@Article{DI17,
  author       = {Yefim Dinitz and Rotem Itzhak},
  title        = {Hybrid Bellman-Ford-Dijkstra algorithm},
  journal      = {J. Discrete Algorithms},
  year         = 2017,
  volume       = 42,
  pages        = {35-44},
  doi          = {10.1016/J.JDA.2017.01.001},
  url          = {https://doi.org/10.1016/j.jda.2017.01.001},
  timestamp    = {Wed, 03 May 2017 10:04:55 +0200},
  biburl       = {https://dblp.org/rec/journals/jda/DinitzI17.bib},
  bibsource    = {dblp computer science bibliography,
                  https://dblp.org}
}

@InProceedings{Fineman24,
  author       = {Jeremy T. Fineman},
  title        = {Single-Source Shortest Paths with Negative Real
                  Weights in $\tilde{O}(m n^{8/9})$ Time},
  year         = 2024,
  booktitle    = {Proceedings of the 56th Annual {ACM} Symposium on
                  Theory of Computing, {STOC} 2024, Vancouver, BC,
                  Canada, June 24-28, 2024},
  pages        = {3-14},
  doi          = {10.1145/3618260.3649614},
  url          = {https://doi.org/10.1145/3618260.3649614},
  crossref     = {DBLP:conf/stoc/2024},
  timestamp    = {Tue, 18 Jun 2024 09:24:05 +0200},
  biburl       = {https://dblp.org/rec/conf/stoc/Fineman24.bib},
  bibsource    = {dblp computer science bibliography,
                  https://dblp.org}
}

@Article{Goldberg95,
  author       = {Andrew V. Goldberg},
  title        = {Scaling Algorithms for the Shortest Paths Problem},
  journal      = {{SIAM} J. Comput.},
  year         = 1995,
  volume       = 24,
  number       = 3,
  pages        = {494-504},
  doi          = {10.1137/S0097539792231179},
  url          = {https://doi.org/10.1137/S0097539792231179},
  timestamp    = {Sat, 27 May 2017 14:22:58 +0200},
  biburl       =
                  {https://dblp.org/rec/journals/siamcomp/Goldberg95.bib},
  bibsource    = {dblp computer science bibliography,
                  https://dblp.org}
}

@Article{Dijkstra59,
  author       = {Edsger W. Dijkstra},
  title        = {A note on two problems in connexion with graphs},
  journal      = {Numerische Mathematik},
  year         = 1959,
  volume       = 1,
  pages        = {269-271},
  doi          = {10.1007/BF01386390},
  url          = {https://doi.org/10.1007/BF01386390},
  timestamp    = {Mon, 29 Jul 2019 15:59:06 +0200},
  biburl       = {https://dblp.org/rec/journals/nm/Dijkstra59.bib},
  bibsource    = {dblp computer science bibliography,
                  https://dblp.org}
}

@article{Bellman58,
 ISSN = {0033569X, 15524485},
 URL = {http://www.jstor.org/stable/43634538},
 abstract = {Given a set of N cities, with every two linked by a road, and the times required to traverse these roads, we wish to determine the path from one given city to another given city which minimizes the travel time. The times are not directly proportional to the distances due to varying quality of roads and varying quantities of traffic. The functional equation technique of dynamic programming, combined with approximation in policy space, yields an iterative algorithm which converges after at most (N — 1) iterations.},
 author = {Richard Bellman},
 journal = {Quarterly of Applied Mathematics},
 number = {1},
 pages = {87--90},
 publisher = {Brown University},
 title = {On A Routing Problem},
 urldate = {2024-06-30},
 volume = {16},
 year = {1958}
}

@book{Ford56,
author="Ford, Lester R.",
title="Network Flow Theory.",
address="Santa Monica, CA",
year = 1956,
publisher="RAND Corporation"
}

@InProceedings{Shimbel55,
  author =       {Alfonso Shimbel},
  title =        {Structure in communication nets},
  booktitle = {Proceedings of the Symposium on Information Networks},
  year =      1955,
  pages =     {199--203},
  address =   {New York, New York},
  publisher = {Polytechnic Press of the Polytechnic Institute of Brooklyn}}

@Article{FT87,
  author       = {Michael L. Fredman and Robert Endre Tarjan},
  title        = {Fibonacci heaps and their uses in improved network
                  optimization algorithms},
  journal      = {J. {ACM}},
  year         = 1987,
  volume       = 34,
  number       = 3,
  pages        = {596-615},
  doi          = {10.1145/28869.28874},
  url          = {https://doi.org/10.1145/28869.28874},
  timestamp    = {Wed, 14 Nov 2018 10:35:26 +0100},
  biburl       = {https://dblp.org/rec/journals/jacm/FredmanT87.bib},
  bibsource    = {dblp computer science bibliography,
                  https://dblp.org}
}

@InProceedings{Moore59,
  author =       {Edward F. Moore},
  title =        {The shortest path through a maze},
  booktitle = {Proceedings of an International Symposium on the Theory of Switching 1957, Part II},
  year =      1959,
  address =   {Cambridge, Massachusetts},
  publisher = {Harvard University Press}}

@Article{Johnson77,
  author       = {Donald B. Johnson},
  title        = {Efficient Algorithms for Shortest Paths in Sparse
                  Networks},
  journal      = {J. {ACM}},
  year         = 1977,
  volume       = 24,
  number       = 1,
  pages        = {1-13},
  doi          = {10.1145/321992.321993},
  url          = {https://doi.org/10.1145/321992.321993},
  timestamp    = {Tue, 18 May 2021 14:56:33 +0200},
  biburl       = {https://dblp.org/rec/journals/jacm/Johnson77.bib},
  bibsource    = {dblp computer science bibliography,
                  https://dblp.org}
}

@Article{AYZ95,
  author       = {Noga Alon and Raphael Yuster and Uri Zwick},
  title        = {Color-Coding},
  journal      = {J. {ACM}},
  year         = 1995,
  volume       = 42,
  number       = 4,
  pages        = {844-856},
  doi          = {10.1145/210332.210337},
  url          = {https://doi.org/10.1145/210332.210337},
  timestamp    = {Wed, 14 Nov 2018 10:35:26 +0100},
  biburl       = {https://dblp.org/rec/journals/jacm/AlonYZ95.bib},
  bibsource    = {dblp computer science bibliography,
                  https://dblp.org}
}

@InProceedings{HJQ25,
  author       = {Yufan Huang and Peter Jin and Kent Quanrud},
  title        = {Faster single-source shortest paths with negative
                  real weights via proper hop distance},
  year         = 2025,
  booktitle    = {Proceedings of the 2025 Annual {ACM-SIAM} Symposium
                  on Discrete Algorithms, {SODA} 2025, New Orleans,
                  LA, USA, January 12-15, 2025},
  pages        = {5239-5244},
  doi          = {10.1137/1.9781611978322.178},
  url          = {https://doi.org/10.1137/1.9781611978322.178},
  crossref     = {DBLP:conf/soda/2025},
  timestamp    = {Mon, 03 Mar 2025 21:22:27 +0100},
  biburl       = {https://dblp.org/rec/conf/soda/HuangJQ25.bib},
  bibsource    = {dblp computer science bibliography,
                  https://dblp.org}
}

@proceedings{DBLP:conf/soda/2025,
  editor       = {Yossi Azar and
                  Debmalya Panigrahi},
  title        = {Proceedings of the 2025 Annual {ACM-SIAM} Symposium on Discrete Algorithms,
                  {SODA} 2025, New Orleans, LA, USA, January 12-15, 2025},
  publisher    = {{SIAM}},
  year         = {2025},
  url          = {https://doi.org/10.1137/1.9781611978322},
  doi          = {10.1137/1.9781611978322},
  isbn         = {978-1-61197-832-2},
  timestamp    = {Thu, 15 May 2025 20:51:37 +0200},
  biburl       = {https://dblp.org/rec/conf/soda/2025.bib},
  bibsource    = {dblp computer science bibliography, https://dblp.org}
}

@InProceedings{DMMSY25,
  author       = {Ran Duan and Jiayi Mao and Xiao Mao and Xinkai Shu
                  and Longhui Yin},
  title        = {Breaking the Sorting Barrier for Directed
                  Single-Source Shortest Paths},
  year         = 2025,
  booktitle    = {Proceedings of the 57th Annual {ACM} Symposium on
                  Theory of Computing, {STOC} 2025, Prague, Czechia,
                  June 23-27, 2025},
  pages        = {36-44},
  doi          = {10.1145/3717823.3718179},
  url          = {https://doi.org/10.1145/3717823.3718179},
  crossref     = {DBLP:conf/stoc/2025},
  timestamp    = {Sun, 06 Jul 2025 13:23:42 +0200},
  biburl       = {https://dblp.org/rec/conf/stoc/DuanMMSY25.bib},
  bibsource    = {dblp computer science bibliography,
                  https://dblp.org}
}

@proceedings{DBLP:conf/stoc/2025,
  editor       = {Michal Kouck{\'{y}} and
                  Nikhil Bansal},
  title        = {Proceedings of the 57th Annual {ACM} Symposium on Theory of Computing,
                  {STOC} 2025, Prague, Czechia, June 23-27, 2025},
  publisher    = {{ACM}},
  year         = {2025},
  url          = {https://doi.org/10.1145/3717823},
  doi          = {10.1145/3717823},
  isbn         = {979-8-4007-1510-5},
  timestamp    = {Sun, 19 Oct 2025 16:30:36 +0200},
  biburl       = {https://dblp.org/rec/conf/stoc/2025.bib},
  bibsource    = {dblp computer science bibliography, https://dblp.org}
}

%%% Local Variables:
%%% mode: LaTeX
%%% TeX-master: "article"
%%% End:

\end{document}

%%% Local Variables:
%%% mode: latex
%%% TeX-master: t
%%% End: